\newtheorem{theorem}{Theorem}
\newtheorem{definition}{Definition}
\newcommand{\Paragraph}[1]{\smallskip\noindent{\bf #1}}
\newcommand{\allref}{aldridge14,atia12,barg17,bui13,bui17,cai17,chan14,cheraghchi09,cheraghchi13,cheraghchi11,gilbert12,gilbert08,inan17,indyk10,lee16,mazumdar12,mazumdar14,mazumdar16,ngo11,porat11,sebo85,vem17,zhigljavsky03}
\begin{document}

\title{Cross-Sender Bit-Mixing Coding}
\titlenote{The authors of this paper are alphabetically ordered.}

\author{Steffen Bondorf}
\authornote{Work was done while this author was in National University of Singapore.}
\affiliation{NTNU Trondheim, Norway}
\email{steffen.bondorf@ntnu.no}

\author{Binbin Chen}
\affiliation{Advanced Digital Sciences Center}
\email{binbin.chen@adsc-create.edu.sg}

\author{Jonathan Scarlett}
\affiliation{National University of Singapore}
\email{scarlett@comp.nus.edu.sg}

\author{Haifeng Yu}
\affiliation{National University of Singapore}
\email{haifeng@comp.nus.edu.sg}

\author{Yuda Zhao}
\authornote{Work was done while this author was in National University of Singapore.}
\affiliation{Advance.AI}
\email{yudazhao@gmail.com}

\begin{abstract}
{\em
Scheduling to avoid packet collisions is a long-standing challenge in networking, and has become even trickier in wireless networks with multiple senders and multiple receivers. In fact, researchers have proved that even {\em perfect} scheduling can only achieve $\mathbf{R} = O(\frac{1}{\ln N})$. Here $N$ is the number of nodes in the network, and $\mathbf{R}$ is the {\em medium utilization rate}.

Ideally, one would hope to achieve $\mathbf{R} = \Theta(1)$, while avoiding all the complexities in scheduling. To this end, this paper proposes {\em cross-sender bit-mixing coding} ({\em BMC}), which does not rely on scheduling. Instead, users transmit simultaneously on suitably-chosen slots, and the amount of overlap in different user's slots is controlled via coding.
We prove that in all possible network topologies, using BMC enables us to achieve $\mathbf{R}=\Theta(1)$. We also prove that the space and time complexities of BMC encoding/decoding are all low-order polynomials.
}
\end{abstract}

\begin{CCSXML}
<ccs2012>
<concept>
<concept_id>10002950.10003712.10003713</concept_id>
<concept_desc>Mathematics of computing~Coding theory</concept_desc>
<concept_significance>500</concept_significance>
</concept>
<concept>
<concept_id>10003033.10003039.10003040</concept_id>
<concept_desc>Networks~Network protocol design</concept_desc>
<concept_significance>500</concept_significance>
</concept>
</ccs2012>
\end{CCSXML}

\ccsdesc[500]{Mathematics of computing~Coding theory}
\ccsdesc[500]{Networks~Network protocol design}

\keywords{Wireless networks, coding, collision}

\maketitle

\begin{figure}
  \centering
  \hspace*{4mm}\includegraphics[scale=.45]{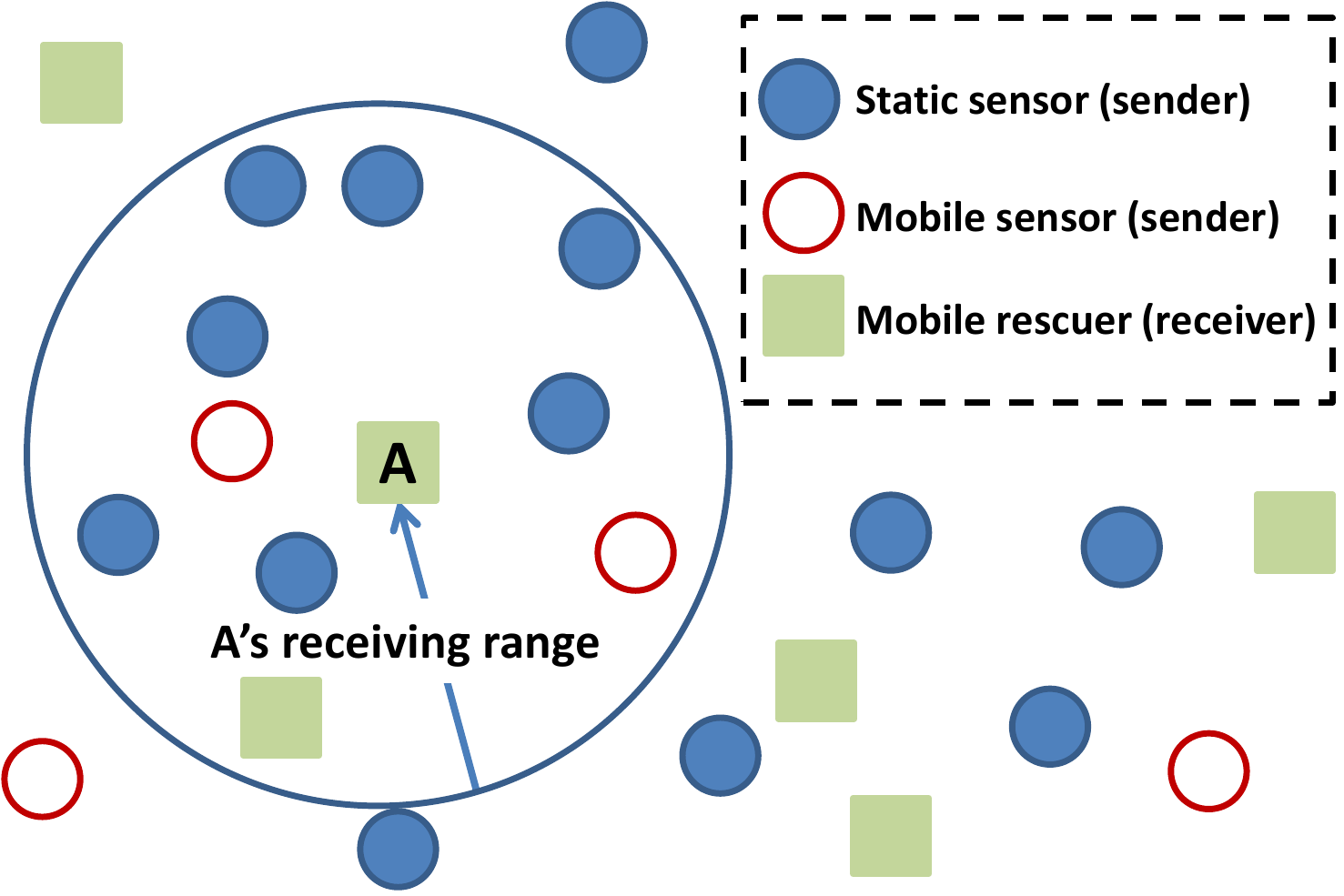}
  \vspace{-0mm}
  \caption{A disaster recovery scenario.}
  \vspace{-5mm}
  \label{fig:app}
\end{figure}

\section{Introduction}
\label{sec:intro}

\Paragraph{Background and motivation.}
Wireless networking relies on a shared communication medium. To avoid packet collision in such a shared medium, a central theme of wireless networking research, since the very beginning, has been to properly schedule/coordinate the senders of the packets. Such scheduling turns out to be complicated, involving a number of challenges such as the hidden terminal problem~\cite{tobagi1975packet}, the exposed terminal problem~\cite{vutukuru2008harnessing}, ACK implosion problem~\cite{zheng08book}, fairness issues among the senders, as well as the lack of global information regarding the network topology.

The growth of wireless networking over the past decade, unfortunately, has made this old problem trickier. Many wireless networks today (or in the near future) are {\em multi-sender multi-receiver networks}, as in the following examples:
\begin{itemize}
\item Consider a disaster recovery scenario (e.g., forest fire or earthquake), where wireless sensors have already been deployed in the environment prior to the disaster. There may also be additional mobile sensors, such as drones, deployed during disaster recovery. There are a number of human rescuers, at different locations. Each rescuer needs to collect information from all the sensors in his/her neighborhood (Figure~\ref{fig:app}). A sensor may belong to the neighborhood of multiple rescuers, and hence needs to send information to all of them.
%\item Today's RFID systems often employ multiple {\em RFID readers}, where each {\em RFID tag} may be in the communication range of more than one RFID reader. For example, the EPC CIG2 standard~\cite{epc08} already includes a {\em dense-reader mode} for such a purpose.
%    Such a design helps to minimize the possibility of an RFID tag being missed by a particular reader due to unfavorable multi-path propagation. In such a setting, each reader needs to receive information for a set of RFID tags. For different readers, their corresponding sets of RFID tags may intersect in arbitrary ways.
\item In recent years, vehicular ad-hoc networks (VANET) \linebreak \cite{olariu09book} have been moving closer to reality. In a VANET, each vehicle is simultaneously a sender and a receiver of information. Road safety applications of VANET often require each vehicle to collect information from all its neighboring vehicles. This again results in a multi-sender multi-receiver scenario similar to the above example.
\item The past few years have witnessed the deployment of distributed energy resources (such as solar panels, batteries, and electric vehicles) in power grid systems. Each energy resource can be controlled by an intelligence device, and all these devices can collectively form a wireless network to carry out peer-to-peer collaboration (such as energy trading, demand response, and grid stability control). This results in a multi-sender multi-receiver scenario that could involve hundreds or even thousands of participating devices.
\end{itemize}

%In particular, if the scheduling is done in a distributed manner (e.g., using contention-based approaches like ALOHA or CSMA/CA~\cite{bertsekas1987data}), there could be excessive collisions due to the hidden terminal problem~\cite{tobagi1975packet}. The channel utilization can be affected by the related exposed terminal problem~\cite{vutukuru2008harnessing}. Furthermore, there can be fairness issue due to the contention, resulting in excessive delay for some unlucky senders, which in turn can negatively affect applications with real-time requirements. Centralized scheduling may help solve some issues mentioned earlier, however, it cannot support mobile nodes, and does not scale to an environment with multiple senders and multiple receivers. One can use retransmissions to deal with packet losses due to collisions. However, acknowledgements (ACKs) are required to detect losses, and in the illustrated multi-to-multi communication scenario, this can lead to an ACK implosion problem.
%
%\red{NP-hard. Hidden node, exposed node. Long tail, without real-time guarantee. Mobile. Multiple receivers. Too many acknowledgements from sender. Need a lot of citations here.}

\Paragraph{Fundamental inefficiency.}
While scheduling in wireless networks is already complicated, in multi-sender multi-receiver networks, scheduling hits a new barrier which is unfortunately {\em fundamental}. To understand, let us consider the example scenario in Figure~\ref{fig:app}. Assume that each rescuer needs to receive a single $d$-byte (e.g., $d=100$) packet
from each of his/her neighboring sensors (i.e., sensors within the rescuer's communication range). For instance, the packet may contain some {\em data item} representing the sensor reading and other information. Also assume that each rescuer has at most $k$ (e.g., $k=100$) neighboring sensors.
%Also assume that each human rescuer has $k$ neighboring sensors (e.g., $k=100$).
Now since each receiver only needs to receive at most $kd$ bytes of information, one might reasonably hope all receivers to receive their respective $O(kd)$ bytes of information within $O(kd)$ time, assuming proper scheduling.

But unfortunately, Ghaffari et %al.~\cite{ghaffari13,ghaffari12}
al.~\cite{ghaffari12}
have proved a strong impossibility result: In the above scenario, even with {\em perfect} scheduling (which assumes perfect global and future knowledge, perfect coordination, as well as infinite computational power), it takes $\Omega(kd\ln N)$ time for all the receivers to receive the packets from their respective senders,
%only require such a goal to be achieved with probability $\epsilon > 0$.}
under certain topologies.\footnote{Actually this holds in almost
100\% (or more precisely, $1-\frac{2}{N^2}$ fraction) of their randomly constructed topologies. Furthermore, their proof and lower bound continue to hold even if we allow $O(\frac{1}{N})$ probability of delivery failure for each receiver.} Here $N$ is the total number of nodes in the network (including both senders and receivers), which can be much larger than $k$.
%if the senders do not even coordinate (e.g., sending their packets probabilistically), the performance can only get worse.

Fundamentally, the multiplicative $\ln N$ term in their lower bound is due to the fact that the best schedules (for sending the packets) with respect to different receivers are incompatible with each other. Hence even though for each receiver there exists a good schedule of $O(kd)$ length, there is no way to merge these schedules into a globally good schedule of $O(kd)$ length.

One should further keep in mind that since their lower bound assumes perfect scheduling, the actual performance of scheduling in practice will likely be much worse.
%\Paragraph{Disturbing implications.}
This lower bound of $\Omega(kd\ln N)$~\cite{ghaffari12} reveals the {\em fundamental} inefficiency of scheduling, in multi-sender multi-receiver wireless networks. It implies that in the above scenario, even with perfect scheduling, the {\em medium utilization rate} (denoted as $\mathbf{R}$) of the wireless network will be at most:
\begin{eqnarray*}
\mathbf{R} &=&
\frac{\mbox{ max \# of useful bits received by a receiver}}{\mbox{\# of bits of airtime used}} \\
&=& \frac{O(kd)}{\Omega(kd\ln N)} = O\Big(\frac{1}{\ln N}\Big)
\end{eqnarray*}
This is undesirable since $\mathbf{R}\rightarrow 0$ as the system size ($N$) increases. Putting it another way, we cannot even utilize any small {\em constant fraction} of the wireless medium, if we rely on scheduling.
%\footnote{Note that an important characteristic of our scenario is that each sender has only one (small) packet to send to its neighboring receivers. If each sender has many packets to send and if each packet is not used immediately upon receipt (e.g., in bulk data transfer), then one might be able to do coding across multiple packets. With such cross-packet coding, the bound of $\mathbf{R} = O(\frac{1}{\ln N})$ for scheduling may no longer apply.}

\Paragraph{The ultimate goal.}
Ideally, one would hope to achieve a constant utilization rate of the wireless medium in the above setting, and also to greatly simplify the design by
avoiding scheduling altogether.
Namely, we hope to take $O(kd)$ time (which is asymptotically optimal) for all the receivers to receive their respective $O(kd)$ bytes of information. Doing so will overcome the lower bound of $\Omega(kd\ln N)$~\cite{ghaffari12} on scheduling, and improve $\mathbf{R}$ from $O(\frac{1}{\ln N})$ to $\Theta(1)$.
%We further aim to avoid all the complexities in scheduling.

\Paragraph{Our results.}
As a key step to achieving the above ultimate goal, this paper proposes {\em cross-sender bit-mixing coding} (or {\em BMC} in short), as the {\em theoretical underpinning}.
If we use BMC in the previous example scenario, then each sensor will simply encode its data item using BMC, and then send the encoding result, {\em without doing any scheduling and simultaneously with all other sensors}. The packets will be superimposed onto each other, and with BMC decoding, a receiver will recover the original data items.
%Each receiver will receive a single packet of $\Theta(kd)$ bits. The information bits about the data items from the senders will be mixed together (i.e., interleaved) in this big packet --- hence the name {\em cross-sender bit-mixing} coding. BMC enables each receiver $i$ to decode this packet to get the original $d$-byte data item from each of the $k$ sensors in $X_i$, with close to $1$ probability. Except for the value of $k$, BMC does {\em not} require any knowledge of the network topology.

%\Paragraph{BMC's formal guarantees.}
The main technical developments in this paper center around the design and formal analysis of BMC. We will prove
%(see Theorem~\ref{the:main}, \ref{the:extend}, \ref{the:promisingexist}, and \ref{the:promising2low})
that in {\em all} possible network topologies and under reasonable parameter ranges (specifically, as long as $k=  \omega(\ln N)$ and $d = \omega(\ln^2 N\times \ln\ln N)$), using BMC in the earlier scenario enables the completion of the transmissions of all the data items in optimal $\Theta(kd)$ time, and hence achieves $\mathbf{R}=\Theta(1)$.
% Binbin: all possible network topologies with bounded degree for each receiver.
%, when each receiver needs to receive an $d$-bytes data item from each of its $m \le k$ neighbors, where $k$ is the maximum receiver degree, our BMC enables the completion of the transmissions of all the data items in {\em optimal} $\Theta(kd)$ time (see Theorem~\ref{the:optimal}).
In terms of the overheads, we will prove
%(see Theorem~\ref{the:complexity})
that the space and time complexities of our BMC encoding/decoding algorithm are all low-order polynomials, allowing efficient implementations.
%design takes $O(\frac{kd}{\delta}\log k)$ space complexity, with $\delta$ (e.g., $10^{-5}$) being the probability of delivery failure of a data item. (To avoid confusion, the previous $\Theta(kd)$ time is regardless of the value of $\delta$.) With probability at least $1-\delta$, BMC's encoding and decoding time complexity is $O(kd + \alpha)$ and $O(kd + \frac{d}{\delta} + \beta)$ respectively (see Theorem~\ref{the:complexity}). Here $\alpha$ is the encoding complexity of Reed-Solomon (RS) code~\cite{XXX} and CRC, while $\beta$ is the corresponding decoding complexity. Clearly, these complexities are all low-order polynomials. We note that BMC has the desirable property that the computational overhead is mostly on the receiver side (i.e., for decoding). This make it particularly suitable for our example application scenario where the senders (i.e., the sensors) are much more resource-constrained than the receivers (i.e., rescuers).
%We will also present some numerical examples on the overheads of BMC, observed in  our own implementation.
%\Paragraph{BMC's quantitative utilities.}
%BMC's main utility comes from breaking the previous lower bound~\cite{ghaffari12} and finally achieves $\mathbf{R}=\Theta(1)$. In addition, BMC avoids the complexities in scheduling. All these benefits are qualitative.
Finally, to supplement our formal results, we also provide some basic numerical examples on BMC's benefits and complexity.

We hope that our theoretical results in this work can attest the promise of this direction, and spur future systems research (especially on the physical layer) along this line.
%For example, in our experiments, BMC consistently achieves about 2 orders of magnitude smaller  probability of data delivery failure than the baselines.
%At the same time,
%a sender using BMC needs to send only about $\frac{1}{4}$ of the total number of bits in the baselines. In scenarios where energy consumption is a concern, this can be an extra benefit of BMC. Finally, we also provide some concrete numerical results for BMC's space and time complexity.
%Assuming that sending a bit takes comparable amount of energy in BMC and in the baseline, this enables a sensor in our example scenario to achieve a $3$x energy saving on communication when using BMC.

\Paragraph{Superimposed code.}
Our overall approach is reminiscent of the decades-old idea of superimposed code.
In fact, BMC can be viewed as a kind of superimposed code~\cite{kautz64}, which in turn is (almost) equivalent to non-adaptive group testing (NAGT)~\cite{du99book}. There have been numerous designs~\cite{\allref} for superimposed code and NAGT. But applying these existing designs to our context will not enables us to improve $\mathbf{R}$ from $O(\frac{1}{\ln N})$ to $\Theta(1)$:
Many of these designs would incur an
{\em exponential} computational complexity of $\Omega(2^{8d})$ in our context, rendering them infeasible. None of the remaining designs (with polynomial computational complexity) can achieve $\mathbf{R}=\Theta(1)$ in our context (see Section~\ref{sec:related}). To our knowledge, BMC is the very first superimposed code that can achieve $\mathbf{R}=\Theta(1)$ without incurring exponential complexity.

%\Paragraph{Assumptions on physical layer.}
%BMC needs to make certain assumptions on the physical layer, which will be elaborated in Section~\ref{sec:assumption}.
%While these assumptions are already satisfied in certain backscatter communication systems (e.g., RFID systems) and in systems with OOK modulation, modifications may be needed in other contexts to satisfy these assumptions. Section~\ref{sec:assumption} will provide more discussions on this. We hope that our theoretical results can attest the promise of BMC, and spur further research on the physical layer for implementing BMC in different contexts.

%Since this paper mainly focuses on providing the theoretical underpinning of our vision, we do not explore further such modifications. Instead, we hope that our work can attest the great potential of this direction, and spur future research along this line, to eventually realize the vision.

\Paragraph{Roadmap.}
Section~\ref{sec:related} discusses related work.
Section~\ref{sec:overview} gives an overview of our BMC design, while
Section~\ref{sec:assumption} discusses BMC's assumptions on the physical layer.
Section~\ref{sec:details} and \ref{sec:lcs} present the details of BMC. Section~\ref{sec:exp} gives some basic numerical examples.
Finally, Section~\ref{sec:conclusions} draws our conclusions.

\section{Related Work}
\label{sec:related}

\Paragraph{Additive channels.}
In {\em additive channels}, a collision of $k$ packets is viewed by the receiver as a linear combination of these $k$ packets. Here the linear combination is usually defined over the individual symbols in the original packets, with vector arithmetic operations. In such a context, researchers have developed various interesting designs that can recover the $k$ original packets from $k$ collisions.
For example, Collision-Resistant Multiple Access (CRMA)~\cite{li2011crma} uses {\em network coding} in additive channels. In CRMA, the receiver obtains $k$ collisions, and solves the $k$ corresponding linear combinations to recover the $k$ original packets. CRMA uses random coefficients to make the $k$ collisions linearly independent. As another example, ZigZag decoding~\cite{gollakota2008zigzag} also obtains $k$ collisions and then solves for the $k$ original packets, while (conceptually) using random initial delays at each sender to make the $k$ collisions linearly independent.

%bounded-contention codes~\cite{censor-hillel12} and bounded-information codes~\cite{censor-hillel15}
%Similar to BMC, all the above schemes~\cite{censor-hillel12,censor-hillel15,gollakota2008zigzag,li2011crma} can achieve $\mathbf{R} = \Theta(1)$ in our context, and all of them also require proper synchronization among the $k$ senders.

Theoretically, those schemes~\cite{gollakota2008zigzag,li2011crma} could potentially also help to  achieve $\mathbf{R} = \Theta(1)$ in our context. However, BMC and those schemes~\cite{gollakota2008zigzag,li2011crma} target different kinds of wireless networks. First, BMC works for low-complexity physical layer implementations --- for example, even for the bare-bone OOK physical layer in Zippy~\cite{sutton15}. CRMA and ZigZag decoding instead require a receiver's radio hardware to at least be able to
estimate the coefficients (expressed as complex numbers) of the channels with different senders, and also to process a stream of complex symbols (measured every sampling interval) based on the channel coefficients.
%Furthermore, BMC can be implemented in software based on the stream of bits (after demodulation) from the physical layer.
%It does not require the physical layer to provide extra raw data.
%In comparison, software-based implementations of CRMA or ZigZag decoding schemes
%require the radio hardware to expose a stream of complex symbols before the demodulation. This poses higher requirements for the computing platform's bandwidth and processing power.
Second, CRMA and ZigZag decoding fundamentally rely on the receiver obtaining accurate channel estimation for all the senders, so that the receiver can determine the coefficients in the linear combinations. As a result, they usually consider a rather small number of simultaneous senders. For example, Zigzag decoding~\cite{gollakota2008zigzag} focuses on 2 or 3 (rather than hundreds of) simultaneous senders. In comparison, BMC does not require such channel estimation at all. Hence BMC can work in rather dense wireless networks with many concurrent senders.
 %(e.g., hundreds of RFID tags within the range of an RFID reader).
BMC can also work in networks where accurate channel estimation is simply infeasible due to fast-changing channel conditions.

%Obtaining accurate channel estimation for each of these tags (senders) will be challenging --- we may even need to first schedule the tags to avoid collisions during channel estimation, which goes back to the old problem of scheduling.

%But compared to these schemes, BMC's assumptions on the communication channel tend to be  weaker and easier to satisfy. Specifically, the schemes~\cite{gollakota2008zigzag,li2011crma} for additive channels fundamentally rely on the receiver obtaining accurate channel estimation for all the senders, so that the receiver can determine the coefficients in the linear combinations.
%%For example in CRMA, the packet delivery rate drops by 40\% to 70\% (depending on the setting) if channel estimation is updated every 20 OFDM symbols instead of every OFDM symbol.
%Accurate channel estimation can be challenging under mobile settings or when the number of senders is not small. BMC provides a promising alternative: BMC does not require channel estimation --- it only needs the sender to be able to send ``blank'' bits.

\Paragraph{XOR channels.}
In {\em XOR channels}, colliding packets are XOR-ed together at the bit-level. For XOR channels, researchers have designed various codes~\cite{censor-hillel12,censor-hillel15} to enable the receiver to
recover the $k$ original packets from $k$ collisions. The schemes~\cite{censor-hillel12,censor-hillel15} for XOR channels need the receiver to be able to tell whether the number of senders sending the ``1'' bit is even or odd, which can be rather difficult to implement. In comparison, BMC can work under the OR channel, where colliding packets are OR-ed together at the bit-level. An OR channel only needs the receiver to tell whether there is at least one sender sending the ``1'' bit.

\Paragraph{All-to-all and one-to-all communication.}
BMC targets \linebreak {\tt all-to-neighbors} communication in multi-hop wireless networks, where every node wants to send a (small) data item to all its neighbors. Related to this, there have been interesting works targeting {\tt all-to-all} and {\tt one-to-all} communication in multi-hop wireless networks.

In {\tt all-to-all} communication, every node has some data item to be disseminated to all nodes in the network.
%(One variant is to allow data aggregation during such dissemination.)
Works on {\tt all-to-all} communication (e.g., Chaos~\cite{landsiedel13}, Mixer~\cite{herrmann18}, Codecast~\cite{mohammad18}) usually exploit i) {\em network coding} for increasing packet diversity, and ii) {\em capture effect} for alleviating the collision problem. Here network coding is done on individual nodes (potentially in software), and is fundamentally different from network coding over additive channels as discussed earlier. Such network coding does not apply to {\tt all-to-neighbors} communication, where different nodes need to receive different sets of data items.
%The capture effect enables a receiver to decode a packet, under certain conditions, when a small number of concurrent senders collide.
The capture effect only works when the number of concurrent senders is small~\cite{landsiedel13,herrmann18}, and
%For example, it has been shown~\cite{landsiedel13} that the packet reception rate already drops to about $30\%$ when the number of concurrent senders reaches $10$.
only enables the packet from the sender with the strongest signal to be decoded. As a result, scheduling is still needed for ensuring a small number of concurrent senders, and for ensuring ``stronger'' senders properly giving opportunities to ``weaker'' senders.
In comparison, BMC avoids the need of scheduling, and enables the decoding of packets from all concurrent senders. In this sense, incorporating BMC into those schemes for {\tt all-to-all} communication
%(i.e., using BMC for one-hop while network coding for multi-hop)
could potentially further improve those schemes --- confirming this will be  part of our future work.

In {\tt one-to-all} communication, a single node wants to disseminate some data to all nodes. Works on {\tt one-to-all} communication (e.g., Glossy~\cite{ferrari11}, Splash~\cite{doddavenkatappa13}, and Pando~\cite{du15}) typically leverage i) {\em constructive interference} where multiple packets with the same content interfere constructively, ii) {\em tree pipelining} where nodes on different levels use different channels, and iii) applying {\em fountain codes} on each node. These techniques do not  apply to {\tt all-to-neighbors} communication. In particular, fountain code does not help in {\tt all-to-neighbors} communication, where different nodes need to receive different sets of data items.

%\ifthenelse{\boolean{shortversion}}{}{
\Paragraph{Capacity of wireless networks.}
As mentioned in Section~\ref{sec:intro}, Ghaffari et %al.~\cite{ghaffari13,ghaffari12}
al.~\cite{ghaffari12} have proved that even with optimal scheduling, $\mathbf{R}$ will still approach zero as the network size increases.
%Their lower bound is actually optimal, in the sense that there exists a matching upper bound.
Their result is purely due to the possibility of collision, and is fundamentally different from the well-known result on the capacity of wireless networks~\cite{gupta00}. The result from \cite{gupta00} is for a more complex setting, and stems not only from the possibility of collision, but also from the need to do multi-hop routing. Nevertheless, BMC might potentially also help to overcome the bounds in \cite{gupta00} --- confirming this is beyond the scope of this work.
%}

\Paragraph{Compressive sensing, superimposed code, and group testing.}
Section~\ref{sec:intro} mentioned that BMC can be viewed as a kind of superimposed code~\cite{kautz64} and non-adaptive group testing (NAGT)~\cite{du99book}. Superimposed code and NAGT, in turn, are related to compressive sensing~\cite{foucart13}. However, there is a fundamental difference~\cite{gilbert12} between compressive sensing and superimposed code/NAGT: In compressive sensing, the superimposition is typically done with vector arithmetic operations. While in superimposed code and NAGT, the superimposition is done using the boolean OR operator.
%At the technical level, however, there are a number of key differences between compressive sensing and superimposed code/NAGT. A most important difference~\cite{gilbert12} is that in compressive sensing, the superimposition is typically done with vector arithmetic operations. While in superimposed code and NAGT, the superimposition is done using the boolean OR operator.
The following will provide a thorough discussion on  existing works on superimposed code and NAGT. For space constraints, we do not further elaborate on works on compressive sensing, which are less relevant to BMC.
% than works on superimposed code/NAGT.

If one were to apply the existing superimposed code and NAGT designs~\cite{\allref} to our context, achieving $\mathbf{R}=\Theta(1)$ would require {\em exponential} decoding computational complexity with respect to $d$.
Specifically, a number of superimposed code and NAGT designs~\cite{aldridge14,atia12,chan14,cheraghchi11,sebo85,zhigljavsky03}, if applied to our context, could achieve $\mathbf{R}=\Theta(1)$. However, none of these schemes provides polynomial-time decoding algorithms. Some of these works (e.g., \cite{aldridge14,chan14,cheraghchi11}) do mention ``efficient'' decoding. But their notion of ``efficient'' means being polynomial with respective to $D$, where $D$ corresponds to the total number of possible data items in our context (i.e., $2^{8d}$).

More recently, researchers have developed a range of interesting designs~\cite{bui17,cai17,cheraghchi13,inan17,indyk10,lee16,ngo11,vem17} for
superimposed code and NAGT, with polynomial decoding complexity (with respect to $k$ and $d$). But none of these can achieve $\mathbf{R}=\Theta(1)$. Specifically, the designs in Indyk et al.~\cite{indyk10} and Ngo et al.~\cite{ngo11} can achieve $\mathbf{R}=\Theta(\frac{1}{k})$, while guaranteeing zero error in decoding. Inan et al.~\cite{inan17}'s designs focus on limiting the column and row weight in the testing matrix, while achieving $\mathbf{R}=O(\frac{1}{k})$ with zero error.
The remaining designs~\cite{bui17,cai17,cheraghchi13,lee16,vem17} allow some positive {\em error probability} $\delta$ in decoding.\footnote{Different works sometimes define $\delta$ in different ways, but in all cases, as $\delta$ decreases, the decoding results get closer to the entirely correct results.} Among these, GROTESQUE~\cite{cai17}
can achieve $\mathbf{R}=\Theta(\frac{1}{\ln k})$, while SAFFRON~\cite{lee16} and Bui et al.~\cite{bui17} can both achieve $\mathbf{R}=\Theta(\frac{1}{f(\delta)})$, with $f(\delta)$ being a function of $\delta$ as defined by some optimization problem. While $f(\delta)$ has no closed-form, it can be verified from the optimization problem in~\cite{lee16} that $f(\delta)\rightarrow \infty$ as $\delta\rightarrow 0$.
The design in Vem et al.~\cite{vem17} can achieve $\mathbf{R}=
d/(f(\delta)\ln \frac{f'(\delta) 2^d}{f(\delta)})$, where $f'(\delta)$ is also a  function of $\delta$. They did not obtain asymptotic bounds for $f(\delta)$ and $f'(\delta)$ when $\delta\rightarrow 0$. Finally, Cheraghchi~\cite{cheraghchi13} proposes a number of schemes while focusing on dealing with noise. None of the schemes from \cite{cheraghchi13} with polynomial decoding complexity can achieve $\mathbf{R}=\Theta(1)$.

Some superimposed code and NAGT designs~\cite{barg17,bui13,cheraghchi09,gilbert12,gilbert08,mazumdar12,mazumdar14,mazumdar16,porat11} are not explicitly concerned with computational overhead, but nevertheless may allow polynomial-time decoding. But none of these schemes can achieve $\mathbf{R}=\Theta(1)$.

Compared to all the above designs, BMC achieves $\mathbf{R}=\Theta(1)$ while needing only polynomial encoding/decoding complexity with respect to $k$, $d$, and $\delta$.
To achieve this, our design of BMC is different from the mainstream approaches for superimposed code and NAGT. For example, many existing designs either use a random testing matrix (e.g., \cite{aldridge14,atia12,chan14,cheraghchi11}) or rely on code concatenation (e.g., \cite{bui17,indyk10,lee16,ngo11,vem17}).
%\footnote{There are also exceptions: For example, Ngo et al.~\cite{ngo11} uses a recursive design for superimposed code.}
%With code concatenation, the payload is first encoded into a $q$-ary codeword using some {\em outer code} (e.g., Reed-Solomon code), and then each $q$-ary character in this codeword is individually encoded using some {\em inner code}.
BMC uses neither approach. Instead, BMC first encodes the data item into a codeword using some erasure code, and then uses a low collision set (LCS) to schedule the transmission time of each symbol in this codeword. While BMC also uses Reed-Solomon (RS) code~\cite{lin04ecc} as some code-concatenation-based designs~\cite{bui17,indyk10,lee16,ngo11,vem17}, BMC leverages RS code's ability to tolerate erasures, rather than RS code's minimum distance.

\Paragraph{LCS as a stand-alone design.}
Our LCS itself can also be viewed as a {\em stand-alone} design for superimposed codes and NAGT.
%\footnote{Since the size of the LCS is $\frac{2k}{\delta}$ instead of $2^d$, in BMC we can afford to do brute-force decoding for LCS.}
But since we use LCS to schedule the transmission time of RS symbols, different elements in our LCS need to have sufficient non-overlap. Hence LCS is more related to the notion of error-correcting NAGT~\cite{macula97},
%\footnote{In \cite{macula97}, such a matrix is called $d^e$-disjunct matrix. We avoid using this term since it collides with notations in this paper.},
especially to its various relaxed versions~\cite{atia12,chan14,cheraghchi11,cheraghchi13,ngo11,zhigljavsky03}. Some of these relaxed versions~\cite{cheraghchi13,ngo11} (also called {\em error-correcting list-disjunct matrices}) allow the decoding result to contain at most $l$ false positives (but no false negatives), deterministically. In comparison, LCS achieves no false positive/negative across all the $k$ senders, with $1-k\delta$ probability. Some other versions~\cite{atia12,chan14,cheraghchi11,zhigljavsky03} offer similar probabilistic guarantees as LCS. But different from the designs in \cite{atia12,chan14,cheraghchi11,zhigljavsky03}, in LCS each element has a fixed weight, so that it can be used to schedule the transmission of RS symbols. Furthermore, \cite{atia12,chan14,cheraghchi11,zhigljavsky03} only shows that a randomly constructed matrix provides the desirable property with high probability (or on expectation). In comparison, we derive some sufficient condition for a randomly constructed set to be an LCS, and such sufficient condition can be verified in polynomial time.

\ifthenelse{\boolean{shortversion}}{
\Paragraph{Scheduling via superimposed codes.}
There is also a large body of works on packet scheduling~\cite{berger84,bonis17,komlos85,wolf85} and channel assignment~\cite{xing07} via superimposed codes.
Since this approach~\cite{berger84,bonis17,komlos85,wolf85} is still doing packet-level scheduling, it cannot overcome the fundamental barrier of $\mathbf{R} = O(\frac{1}{\ln N})$ from \cite{ghaffari12}.
}{
\Paragraph{Scheduling via superimposed codes.}
There is also a large body of works on packet scheduling~\cite{berger84,bonis17,komlos85,wolf85} and channel assignment~\cite{xing07} via superimposed codes. Here each sender, conceptually, has a packet to send.
The idea is that each sender will correspond to a (distinct) codeword.
The positions of the ``1'' bits in the codeword determine in which intervals (or on which channels) the corresponding sender should send its packet. Note that the codeword here has nothing to do with the content of the sender's packet.
A sender may send the {\em same} packet multiple times if there are multiple ``1'' bits in its codeword.
The hope is that for every sender, there exists at least one interval (or one channel) during which the sender is the only one sending.
Since this approach~\cite{berger84,bonis17,komlos85,wolf85} is still doing packet-level scheduling, it cannot overcome the fundamental barrier of $\mathbf{R} = O(\frac{1}{\ln N})$ from \cite{ghaffari12}.
}

\Paragraph{Public/private coins.}
In designing BMC, one of the ideas we use is to reduce the number of different ways to select the transmission slots. This follows the spirit of using private coins to simulate public coins~\cite{newman91}. However, the original mechanism from \cite{newman91} is only an existential proof, while BMC obviously needs to construct an explicit protocol.

\section{Overview of BMC}
\label{sec:overview}

Recall the example in Figure~\ref{fig:app} where each receiver/rescuer needs to receive a $d$-byte data item from each of its neighboring \linebreak senders/sensors (i.e., senders within the receiver's communication range).
%From now on, to avoid confusion, we will refer to the information in the packet as the {\em data item}.
To facilitate understanding, assume for now that each receiver has the same number $k$ of neighboring senders (note that this assumption is not actually needed for our BMC protocol or its formal guarantees). Our goal is\mbox{} to enable every receiver to receive its respective $k$ data items in $\Theta(kd)$ time, regardless of the network topology (i.e., which senders are neighbors of which receivers).

Let $N$ be the total number of nodes in the network, including both senders and receivers.
We will assume $k =  \omega(\ln N)$, since scheduling tends to be harder as $k$ increases, and we want BMC to address the harder cases. (For readers unfamiliar with the $\omega()$ notation: If $k =  \omega(\ln N)$, then $k = \Omega(\ln N)$ and $k\ne \Theta(\ln N)$.)
%A second reason is that $k$ often needs to be not too small for a network to be connected. For example, for random graphs based on the Erdos-Renyi model to be connected, nodes need to have average degree of at least about $\ln N$~\cite{mitzenmacher05book}.
%
We also need $d$ not to be too small so that our formal analysis later can approximate the tails of the various distributions --- specifically, we assume $d = \omega(\ln^2 N\times \ln\ln N)$.
We expect such a condition to be relatively easy to satisfy, since the terms on the right-hand side are logarithmic, and since one could concatenate multiple data items together to increase $d$, if needed.

\subsection{Sharing the Damage of Collision}

A careful look at the $\Omega(kd\ln N)$ barrier~\cite{ghaffari12} leads to a basic observation: {\em Implicitly, scheduling in wireless networks is always done at the packet-level, with packets being the units for scheduling}. This means that in our scenario, if a sender's packet (containing its $d$-byte data item) does not collide with other packets, then most of the bits in the packet will be intact. But if it does, then many of its bits will be affected. Assume as an example that $20$\% of the packets experience collision, where the receivers of these packets are not able to decode them correctly.

Now instead of having $20$\% such unlucky packets, what if all the packets share the ``damage of collision''? This means that about $20$\% of the bits in {\em every} packet will be corrupted. Quite interestingly, doing so shifts the paradigm: We can easily use proper coding to tolerate those $20$\% errors in each packet, and successfully recover {\em all} packets correctly.
%This is despite that the total number of corrupted bits, across all packets, remains the same as before.

\Paragraph{Cross-sender bit-mixing.}
The above forms the starting point of our BMC design. Conceptually, BMC partitions the $\Theta(kd)$ available time into $\Theta(kd)$ {\em slots} where each slot is the airtime of, for example, a single bit (Figure~\ref{fig:bmc1}).
Among these slots, each sender chooses $\Theta(d)$ slots in a certain randomized fashion, {\em without coordinating with other senders}. Next,
each sender embeds $\Theta(d)$ bits of information for its own data item into those chosen slots. The remaining slots are left ``blank''. This then becomes a BMC codeword for that sender.
With slight abuse of notation, now a bit in a BMC codeword may take one of the following three values: ``0'', ``1'', or ``blank''. We will later explain how to do modulation/demodulation for blank bits.

Our BMC design includes a method for choosing the slots, so that with good probability and without needing any coordination, among a sender's $\Theta(d)$ chosen slots, a majority of them do not collide with other senders' choices. BMC can infer which slots suffer from collisions, and will then treat those slots simply as erasures. In some sense, one could view the selection of the $\Theta(d)$ slots as a form of ``bit-level scheduling'', as compared to standard packet-level scheduling. Such ultra-fine-grained ``bit-level scheduling'' results in bits from different senders being mixed together in BMC.

\begin{figure}
  \centering
  \includegraphics[width=\columnwidth]{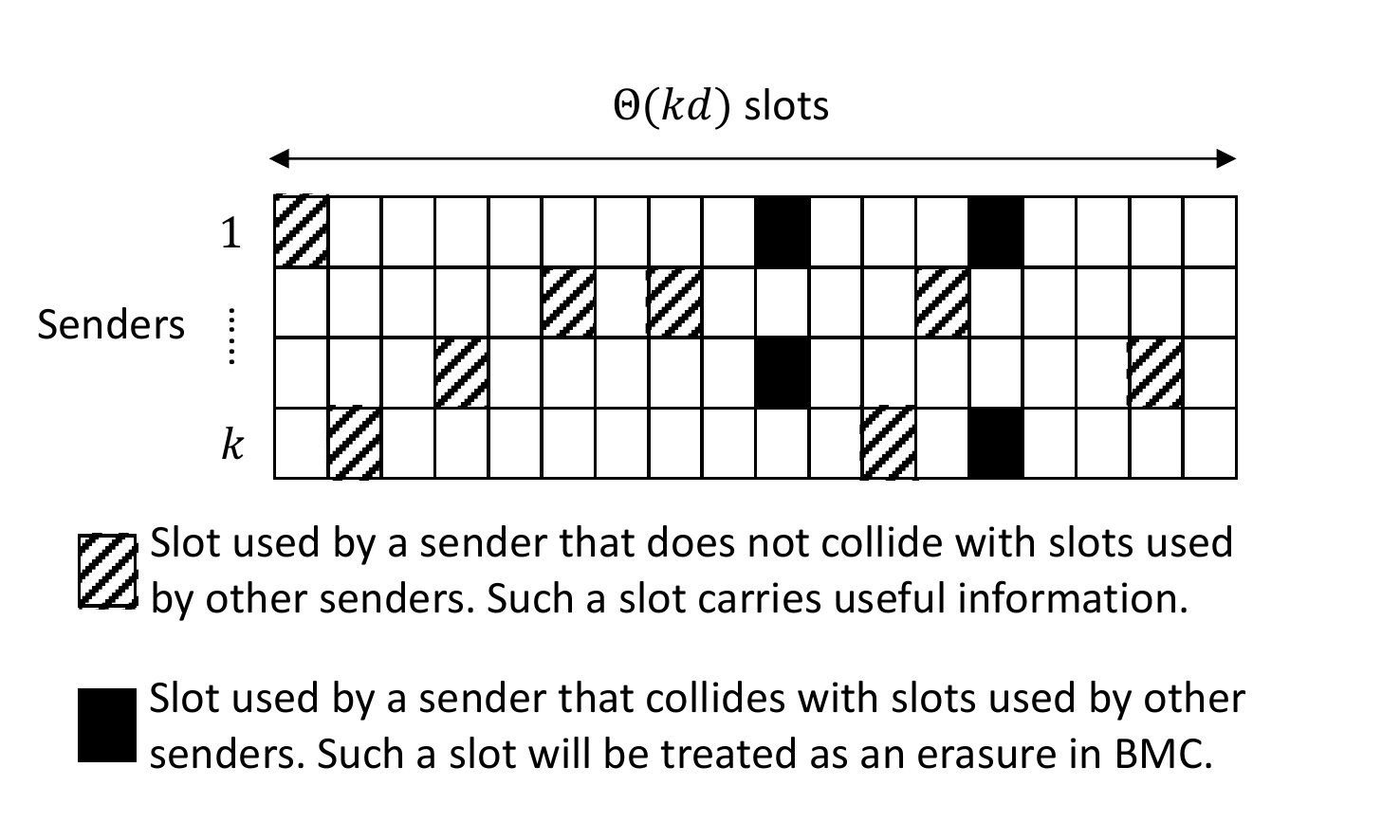}
  \vspace{-8mm}
  \caption{Cross-sender bit-mixing.}
  \vspace{-2mm}
  \label{fig:bmc1}
\end{figure}

%\Paragraph{Why this helps.}
%In our target scenarios (e.g., Figure~\ref{fig:app}), each sender only has one small data item (e.g., 50 bytes) to send, which likely fits into a single packet. In standard (packet-level) scheduling, all the 400 bits will be scheduled as one atomic unit. In comparison,  BMC conceptually schedules all the information bits individually. Such flexibility (together with proper coding) enables BMC to overcome the $\Omega(kd\ln N)$  barrier. On the other hand, such flexibility also introduces a new challenge that was not present in (packet-level) scheduling, as explained next.

\subsection{Central Challenge in Bit-Mixing}

While the idea of ``bit-level scheduling'' is conceptually simple, it introduces a new and unique challenge that was not present in standard (packet-level) scheduling: Unlike a packet, a bit does not have (or cannot afford to have) a ``header''. Hence a receiver cannot easily tell which bits are from which senders.
%\footnote{``Expanding'' each individual bit to include the sender's id would prevent us from achieving $\mathbf{R}=\Theta(1)$.}
This constitutes the central challenge in BMC: To decode, a receiver needs to know which $\Theta(d)$ slots are chosen by each sender. Since each sender chooses $\Theta(d)$ slots out of total $\Theta(kd)$ slots, it may take up to $\log_2 {\Theta(kd) \choose \Theta(d)} = \Omega(d\log k)$ bits to describe those slots. This is even larger than the $d$-byte data item itself.

\Paragraph{Constraining choices.}
Our first step in overcoming this challenge is to substantially reduce the number of possible ways to do such selections. Formally, we use a {\em masking string} to specify which $\Theta(d)$ slots, out of the $\Theta(kd)$ slots, are selected. The masking string has a length of $\Theta(kd)$ and contains only ``1'' bits and ``blank'' bits, where a ``1'' bit means that the corresponding slot is selected.
We will construct a set $S$ with only $\Theta(\frac{k}{\delta})$ masking strings, with certain properties (Figure~\ref{fig:bmc2}). Here $\delta$ is a tunable parameter in BMC. It corresponds to the probability of delivery failure of a data item, and is usually a small value such as $o(\frac{1}{k})$ or $o(\frac{1}{kN})$. Hence the set $S$ will usually have $\omega(k^2)$ or $\omega(k^2 N)$ masking strings. Roughly speaking, setting $\delta = o(\frac{1}{k})$ will ensure that for any {\em given} receiver, the probability of it successfully decoding all its $k$ data items is close to $1$. Since there can be up to $N$ receivers in the network, having $\delta = o(\frac{1}{k})$ would mean that while most receivers can decode successfully, there may still be a vanishingly small fraction of receivers that cannot. In comparison, setting $\delta = o(\frac{1}{kN})$ will provide an even stronger guarantee: With probability close to $1$, {\em all} receivers in the networks will successfully decode {\em all} their respective $k$ data items.

A sender will choose a uniformly random string from $S$, and then use those slots as specified by the string. Doing so decreases the number of ways to select the slots from ${\Theta(kd) \choose \Theta(d)}$ to $\Theta(\frac{k}{\delta})$.\footnote{Note that even though $\delta$ is a small value such as $o(\frac{1}{k})$ or $o(\frac{1}{kN})$, the quantity of $\frac{k}{\delta}$ will still be much smaller than ${kd \choose d}$, since ${kd \choose d} > k^d$ where $d$ is on the exponent.}
As a critical step, we will prove that constraining ourselves to the masking strings in $S$ will not disrupt the properties that we need: Namely, with good probability, among a sender's $\Theta(d)$ chosen slots, a majority of them will not collide with other senders' choices.
%Namely, we will show that with probability $1-\delta$, among a sender's $\Theta(n)$ chosen slots, a majority of them will not collide with other senders' choices. This is despite the fact that there are only $\Theta(\frac{k}{\delta})$ different ways to choose the $\Theta(n)$ slots.

\begin{figure}
  \centering
  \hspace*{-9mm}
  \includegraphics[scale=.4]{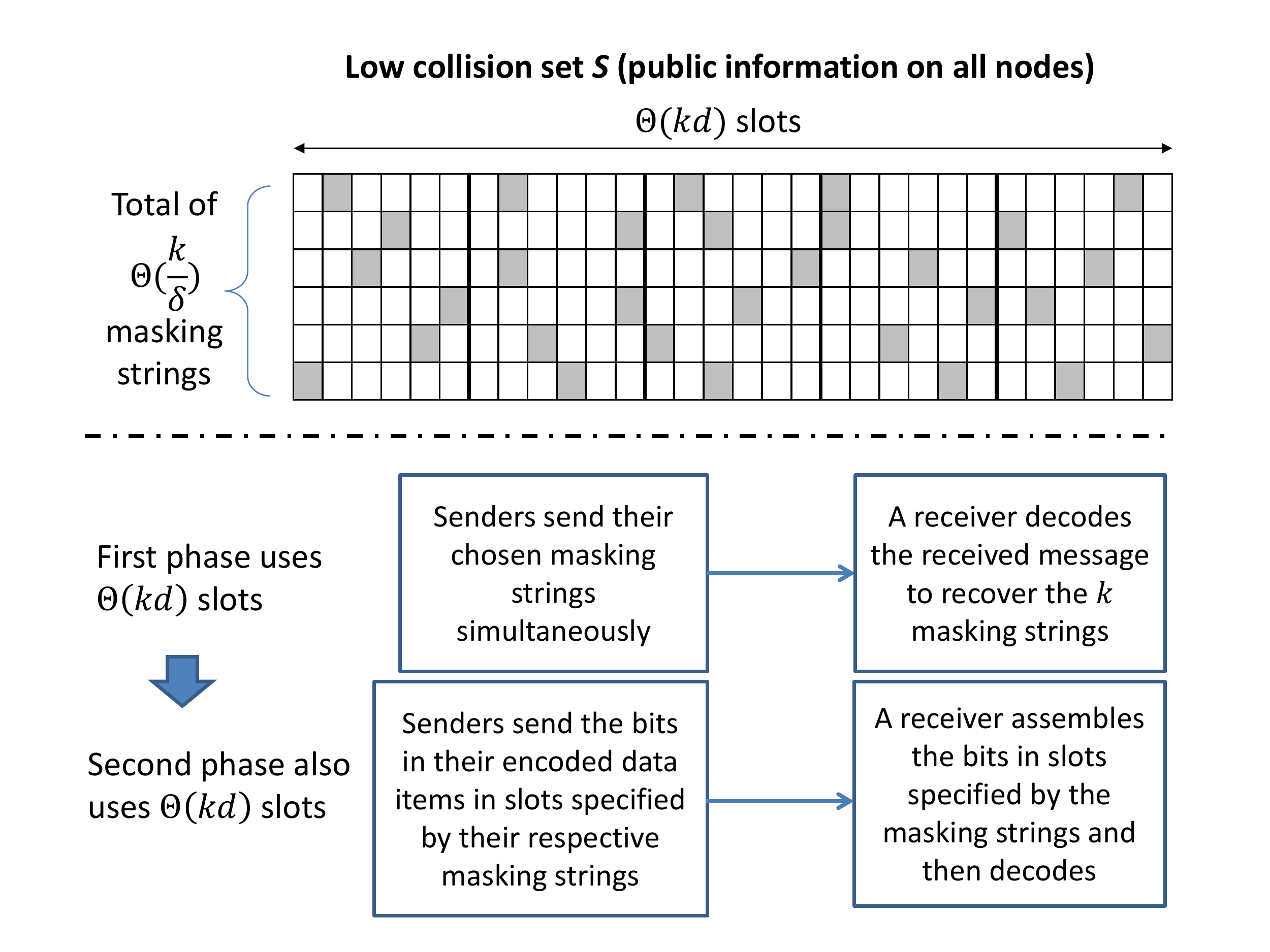}
  \vspace{-8mm}
  \caption{The two phases in BMC, with $k=  \omega(\ln N)$ and $d = \omega(\ln^2 N\times \ln\ln N)$.
  %Since $k=  \omega(\ln N)$ and $d = \omega(\ln^2 N\times \ln\ln N)$, the length of each masking string is $\Theta(kd) = \omega(\ln^3 N\times \ln\ln N)$.
  }
  \vspace{-5mm}
  \label{fig:bmc2}
\end{figure}

\Paragraph{Determining which masking strings are chosen.}
We will have all parties keep a copy of $S$ --- namely, $S$ will be hardcoded. But the senders still needs to communicate to a receiver which masking strings are being used.
Doing so naively would bring us back to the problem of packet collision and scheduling.

In BMC, to inform a receiver which masking strings are being used,
prior to sending the data items,
all the senders will first send their respective masking strings. This results in two conceptual phases --- see Figure~\ref{fig:bmc2}. Note that these two phases are only conceptual, and do not involve two interactive rounds. In both phases, the senders will  send {\em simultaneously} and in a bit-aligned fashion.
%Next in the second stage, all senders will simultaneously send their data items, using their chosen masking strings.
%(Section~\ref{sec:practical} will later discuss the overhead of sending the masking strings.)
%For sending a masking string, if a slot is selected by the masking string, the sender will send a ``1'' bit for that slot, otherwise the sender will send a blank bit. (The sender never sends a ``0'' bit here.)

We will prove that our set $S$ has an additional property: With probability $1-\delta$, for all $\lambda\in S$ where $\lambda$ is not used by any of the $k$ senders, at most half of the $\Theta(d)$ slots chosen by $\lambda$ may collide with the masking strings used by the $k$ senders. This property enables a receiver to decode the masking strings in the following way.\footnote{This is only for decoding the masking strings --- the data items will be decoded separately in the second phase.} The receiver will compare the message $z$ it receives, with {\em every} masking string $\lambda\in S$:
\begin{itemize}
\item If $\lambda$ was used/sent by some sender, then in {\em all slots} where $\lambda$ is ``1'', the message $z$ should have a ``1'' bit or a collision of multiple ``1'' bits. (Remember that no sender sends ``0'' bits in the first phase.)
\item If $\lambda$ was not used/sent by any sender, then in {\em at most half of the slots} where $\lambda$ is ``1'', the message $z$ will have a ``1'' bit or a collision of multiple ``1'' bits. (This is due to the additional property of $S$ described above.)
\end{itemize}
Such a separation enables a receiver to tell whether $\lambda$ was used/sent by one of the $k$ senders.\footnote{Section~\ref{sec:practical} will further discuss how to deal with errors in transmission.}

\Paragraph{Achieving $\mathbf{R}=\Theta(1)$.}
Let us quickly summarize how the above design achieves $\mathbf{R}=\Theta(1)$. Each receiver needs to receive $k$ data items from its respective $k$ senders, where $k =  \omega(\ln N)$. Each data item has $d$ bytes, where $d = \omega(\ln^2 N\times \ln\ln N)$. The length of each masking string is $\Theta(kd)= \omega(\ln^3 N\times \ln\ln N)$. %The set $S$ has $\Theta(\frac{k}{\delta})$ masking strings. For $\delta$ being $o(\frac{1}{k})$ or $o(\frac{1}{kN})$, this translates to $\omega(k^2)$ or $\omega(k^2 N)$ masking strings, respectively.

Each sender does the following, without worrying about how many receivers it corresponds to: Each sender first sends its chosen masking string, taking $\Theta(kd)$ slots, or equivalently, $\Theta(kd)$ bits of airtime. Next each sender sends its encoded data item, again taking $\Theta(kd)$ bits of airtime. Note that all senders simultaneously go through these two phases synchronously, hence the total airtime needed is just $\Theta(kd)$ bits. Each receiver, upon successful decoding, obtains $k$ data items, each with $d$ bytes. Hence we have the medium utilization rate $\mathbf{R}=\frac{k\cdot 8d} {\Theta(kd)} = \Theta(1)$.

\section{Physical Layer Issues}
\label{sec:assumption}

\subsection{BMC's Assumptions}

BMC needs a few assumptions on the physical layer.

\Paragraph{Synchronization.}
BMC assumes that in the first phase, all the senders send their masking strings synchronously, so that the packets are superimposed in a bit-aligned fashion. Similarly in the second phase, all the senders send their encoded data items in a synchronized and bit-aligned fashion.
%\red{Some backscatter communication systems (e.g., RFID systems) directly satisfy this assumption. In other systems, bit-level synchronization has become a well-known and widely-used technique in recent years (e.g., in CRMA~\cite{li2011crma}, FICA~\cite{tan10}, Glossy~\cite{ferrari11}, and SourceSync~\cite{rahul10}).}

\Paragraph{Modulation/demodulation.}
A bit in a BMC codeword or masking string may be ``0'', ``1'', or ``blank''. For modulation, when a sender sends a ``blank'' bit, BMC assumes that the sender does not emit radio signals.\footnote{BMC allows the modulation of the ``0'' bit to be the same as that of the ``blank'' bit. BMC never needs to differentiate a ``0'' bit from a ``blank'' bit in demodulation.}

In the BMC protocol, in any given slot, there are $k$ senders each sending a bit, with each bit being ``0'', ``1'', or ``blank''. But not all combinations of ``0'', ``1'', and ``blank'' bits are possible in a given slot. For example, in any slot in the first phase of the protocol, either all $k$ senders send ``blank'' bits, or less than $k$ of them send ``blank'' bits and all the remaining ones send ``1'' bits. Hence the receiver has the {\em prior knowledge} that it must be one of these two cases.
Our following assumptions need to hold only when such prior knowledge is available to the receiver. Specifically, we assume that in {\em any given slot}:
\begin{enumerate}
\item Given the prior knowledge that exactly one sender sends a non-``blank'' bit and all other senders send ``blank'' bits, the receiver can tell whether the non-``blank'' bit is ``0'' or ``1''. (This property is needed for the second phase of BMC.)
\item Given the prior knowledge that either i) all $k$ senders send ``blank'' bits or ii) less than $k$ of them send ``blank'' bits and all the remaining ones send ``1'' bits, the receiver can distinguish these two cases. Without loss of generality, we denote the demodulation result in these two cases as ``0'' and ``1'', respectively. (This property is needed for the first phase of BMC.)
\end{enumerate}
Note that in BMC, the receiver will always have the respective prior knowledge (directly from the BMC protocol) whenever it needs to satisfy the above assumptions. BMC does not need any other assumptions on (de)modulation. For example, BMC is not concerned with the demodulation of the collision of multiple ``0'' bits, or the collision of ``0'' bits and ``1'' bits, since such collisions can only occur in those slots not used by BMC decoding.

\subsection{Using BMC with Some Example Physical-layer Implementations}

The following discusses how BMC can be potentially used with some example physical layer implementations.

% Each PHY discusses the following:
% blank bit, distinguish between all blank and at least one ``1''
% synchronization, re-synchronization,
% (The first phase doesn't need to be re-sent until the topology is changed)
% per-symbol preamble (PSK), per-symbol guard interval.

\Paragraph{Using BMC with Zippy's physical layer.}
Zippy~\cite{sutton15} is a recent design for on-demand flooding in multi-hop wireless networks. Its physical layer uses OOK modulation to simplify the transceiver circuitry and to achieve superior power-efficiency. BMC can be used over Zippy's physical layer, without any changes needed to the physical layer.

Specifically, with Zippy's OOK modulation, ``blank'' bits in BMC should be directly treated as ``0'' bits in modulation. Hence a sender will not emit radio signals for any ``blank'' bit or ``0'' bit. Now if in a slot exactly one sender sends an information bit (i.e., a ``0'' or ``1'' bit) and all other senders send ``blank'' bits, obviously a receiver in Zippy can tell whether the information bit is ``0'' or ``1''. Next, if in a slot one or more senders send ``1'' bits while the remaining senders send ``blank'' bits, Zippy~\cite{sutton15} has shown that with {\em carrier frequency randomization}, the receiver can effectively demodulate the received signal to ``1''. Thus the receiver can properly differentiate the case where all $k$ senders send ``blank'' bits from the case where less than $k$ of them send ``blank'' bits and all the remaining ones send ``1'' bits. Hence BMC's two assumptions on demodulation are both satisfied.

For synchronization among senders, BMC could directly use the existing distributed synchronization mechanism in Zippy~\cite{sutton15}. Zippy~\cite{sutton15} has shown that it can achieve a synchronization error of tens of microseconds between all pairs of  neighboring nodes, throughout the network. Since Zippy operates on a slow data rate of $1.36$ kbps with each bit taking about $700$ microseconds,
and since a receiver takes multiple samples for each bit, such an error should already enable good bit-level alignment.
Finally, due to clock drift on each node, re-synchronization will be needed periodically. Since a typical crystal oscillator can drift 20 parts per million (PPM), re-synchronization can be done once every few seconds.
%(and potentially in the middle of a BMC codeword).
Since Zippy's network-wide synchronization takes only tens of milliseconds~\cite{sutton15}, the fraction of the airtime wasted by such periodic re-synchronization will just be a few percent.

\Paragraph{Using BMC in RFID systems.}
In RFID systems, an {\em interrogator} transmits a radio wave to {\em tags}, and each tag either reflects the radio wave back (which corresponds to sending back a ``1'' bit) or keeps silent (which corresponds to sending back a ``0'' bit). The modulated backscattered wave can then be demodulated by one or multiple receivers.
%(e.g., as in \cite{shober2001antenna}).

BMC can be used in single-interrogator RFID systems without needing any changes to the physical layer. Specifically, with backscatter communication in RFID systems, synchronization is already achieved, and the bits sent back from the tags will already be properly aligned. The tags (i.e., senders) will treat ``blank'' bits the same as ``0'' bits. The two assumptions needed by BMC on demodulation will then be directly satisfied~\cite{zheng14}. When there are multiple interrogators, to use BMC, the interrogators need to first properly synchronize among themselves (e.g., via a backhaul network) so they transmit the same radio wave synchronously.

\Paragraph{Using BMC in ZigBee systems.}
BMC might find its applicability in more complex wireless systems as well, after appropriate changes to the physical layer. Let us take ZigBee (IEEE 802.15.4)  as an example. ZigBee may use DSSS/O-QPSK modulation in the 2.4GHz band to transmit {\em ZigBee symbols}. Each ZigBee symbol contains 32 {\em chips}, which map to 4 bits.

First, to achieve the synchronization needed by BMC in ZigBee, one could use the distributed synchronization mechanism in Glossy~\cite{ferrari11}. Under ZigBee, Glossy achieves a synchronization error of less than 0.5 microsecond among neighbors~\cite{ferrari11}.
At 250kbps data rate, each ZigBee symbol takes 16 microseconds. With some extra inter-symbol guard time, we expect such synchronization error to be small enough to achieve good symbol alignment across the senders.
As before, periodic re-synchronization may be needed due to clock drifts.
For example, with 2.5-microsecond inter-symbol guard time and using oscillators with maximum 20 PPM clock drift, it suffices to re-synchronize every 0.1 second.
Glossy achieves distributed synchronization by flooding a packet. If each hop in the flooding takes 0.5 millisecond, we estimate that Glossy's flooding will likely finish within 3 millisecond in a 5-hop network. Hence we estimate the fraction of the airtime wasted by re-synchronization to be roughly $(\mbox{$3$ milliseconds})/(\mbox{$0.1$ second}) = 3\%$ in such a case.

%To deal with time drift among nodes while reducing resynchronization overhead, inter-slot guard time can be introduced to tolerate a certain level of synchronization errors.
%% e.g., inserting a guard interval of 4 chips (or 2 us) allows resynchonization to be carried out every 100ms to ensure the time drift is below 2us when using crystals with maximum 20PPM drift).
%% shortest Zigbee packet is 16bytes long, which takes an airtime of 0.512ms, hence a Glossy flooding round takes less than 5ms even for a network with 9 hops.
%% 100ms resynchronization period results in <5% of synchronization overhead.

Second, to satisfy the assumptions needed by BMC on demodulation, one could
introduce ``blank'' ZigBee symbols (instead of ``blank'' bits). When sending a ``blank'' ZigBee symbol, the sender just keeps silent.

In the first phase of BMC, when the protocol needs to send a ``blank'' bit (or ``1'' bit), we will actually let the sender send a ``blank'' ZigBee symbol (or a ZigBee symbol corresponding to  ``1111'').\footnote{This will make the first phase less efficient, but note that BMC's airtime will likely be dominated by the second phase anyway.}
Recall that the synchronization error between different senders is supposed to be well below the duration of a ZigBee symbol. We hence expect that the receiver can differentiate a
``blank'' ZigBee symbol from the superimposition of one or more ZigBee symbols that all correspond to  ``1111'', by examining the energy level of the received signal.
This then satisfies the demodulation assumption needed for the first phase of BMC.
%When multiple senders synchronously send the same ZigBee symbol, Glossy~\cite{ferrari11} has shown the possibility of achieving constructive interference. Hence we expect that the receiver can differentiate a ``blank'' ZigBee symbol from the superimposition of one or more ZigBee symbols that all correspond to  ``1111''. This then satisfies the demodulation assumption needed for the first phase of BMC.

In the second phase of BMC, in our full design (see Section~\ref{sec:codingdata}), a sender actually sends a Reed-Solomon code symbol (RS-code symbol) in each slot. An 8-bit RS-code symbol then corresponds to
two 4-bit ZigBee symbols, and a ``blank'' RS-code symbol conveniently translates to two ``blank'' ZigBee symbols. During the second phase, BMC will require the receiver to demodulate ZigBee symbols from {\em different} senders.
%With O-QPSK demodulation, the receiver will need to establish a reference phase for each sender.
%Recall that each sender has already chosen a masking string, which specifies $\Theta(d)$ slots (out of total $\Theta(kd)$ slots) for the sender to use. To enable the establishment of the reference phase, out of these $\Theta(d)$ slots, each sender can use the first $x$ slots to send some pre-defined reference ZigBee symbols, for some appropriate value of $x$. The remaining $(\Theta(d)-x)$ slots will be used to send the BMC codeword. Note that some of these $x$ slots may collide with other senders. Based on the decoding results in the first phase, the receiver can already predict which slots will be collision-free, and can thus use the reference ZigBee symbols in those slots to establish the reference phase.
To enable such demodulation in DSSS/O-QPSK, we may need to add a few reference chips before every two ZigBee symbols (i.e., every RS-code symbol) to recalibrate the demodulation baseline for the sender of the next two ZigBee symbols. (One could further optimize by sending such reference chips only when needed, instead of for every two ZigBee symbols.)

\section{BMC Encoding and Decoding}
\label{sec:details}

%This section and the next section will present the central technical developments in this paper.
This section will elaborate BMC encoding/decoding algorithm. Our BMC algorithm critically relies on the existence of a {\em low collision set} (or {\em LCS}). The existence of LCS, as well as the possibility of finding one, will be formally proved in Section~\ref{sec:lcs}.

Table 1 summarizes our key notations.
BMC assumes that the maximum degree $k$ of a receiver in the wireless network is known. In practice, it suffices to provide BMC with some upper bound $k'$ for $k$. The only consequence is that the resulting $\mathbf{R}$ will be reduced by a factor of $\frac{k'}{k}$.
%A specific receiver may have a degree of $m$ where $m \le k$.
BMC also assumes that the network size $N$ is known. Again, in practice, it suffices to provide BMC with some upper bound $N'$ for $N$. The only consequence is that the time complexity and space complexity of BMC may increase by a factor of $\frac{N'}{N}$.
%As explained in Section~\ref{sec:overview}, we assume $k=  \omega(\ln N)$ and $d = \omega(\ln^2 N\times \ln\ln N)$.

%These assumptions help to facilitate our proofs for the formal guarantees of BMC --- in particular, they allow us to formally approximate the tails of the various distributions in the analysis.

\begin{table}
%\small
\centering
\caption{Key notations.}
\vspace*{-4mm}
\hspace*{-2mm}
\begin{tabular}{|c|l|}
\hline
$N$ & total number of nodes in the network \\
\hline
$k$ & maximum degree of a receiver in the network\\
& (e.g., 100 in practice, and $\omega(\ln N)$ asymptotically) \\
\hline
$d$ & size (in bytes) of the data item (with CRC) on each sender\\
&  (e.g., 100 in practice, and $\omega(\ln^2 N\times \ln\ln N)$ asymptotically) \\
\hline
$w$ & weight of masking string \\
&(e.g., $w=2d$ for $1$-byte RS symbols) \\
\hline
$\delta$ & tunable parameter in BMC (e.g., $o(\frac{1}{k})$ or $o(\frac{1}{kN})$), \\
& corresponding to delivery failure probability  \\
%\hline
%$S$ & a low collision set \\
%\hline
%$\lambda$, $t_i$, $s_i$ & masking strings \\
\hline
\end{tabular}
\vspace*{-2mm}
%\normalsize
\end{table}

\subsection{Low Collision Set}
\label{sec:lcsdef}

We first define {\em masking strings}. In the previous section, we explained that a masking string only contains ``1'' bits and ``blank'' bits. For ease of discussion, from this point on, we will use ``0'' bits to represent ``blank'' bits in the masking strings.
\begin{definition}
A binary string is a {\em $(k,w)$ masking string} if it is the concatenation of $w$ (potentially different) binary substrings of length $4k$, with each substring having a Hamming-weight of $1$.
\end{definition}

Obviously, a $(k,w)$ masking string has a length of $4kw$ and a Hamming-weight of $w$. For two equal-length binary strings $\lambda$ and $\eta$, recall that their {\em inner product} (denoted as $\lambda\cdot \eta$) is defined as $\lambda\cdot \eta = \sum_i (\lambda[i]\times \eta[i])$. (Throughout this paper, we use $a[i]$ to denote the $i$-th element of a binary string $a$.) The following defines {\em compatibility} between a masking string $\lambda$ and a multi-set $T$ of masking strings. Intuitively, if they are compatible, then the total number of collisions between $\lambda$ and $T$ is limited:
\begin{definition}
Consider any $(k,w)$ masking string $\lambda$ and any multi-set $T=\{t_1, t_2, \ldots, t_m\}$ of $(k,w)$ masking strings. We say that $T$ is {\em compatible} with $\lambda$ if and only if\, $\sum_{i=1}^{m} (\lambda \cdot t_i) \le \frac{w}{2}$.
%where $\bigvee_{i=1}^{m} t_i = (t_1[1]\vee\ldots\vee t_m[1])\ldots(t_1[j]\vee\ldots\vee t_m[j])$ is the logical 'or' over the bits of masking strings of equal length~$j$. XXX
\end{definition}

We can now define an LCS:
%Intuitively, a set of masking strings is an LCS if it satisfied the following property: When $m$ ($m\le k$) senders each independently choosing and using a uniformly random masking string from the set, BMC decoding can succeed with high probability. Note that BMC decoding needs to succeed for both the masking string and the data item, hence the definition will have two separate requirements:
\begin{definition}
\label{def:lcs}
A set $S$ of $(k,w)$ masking strings is a {\em $(k,w, \delta)$ low collision set} (or {\em LCS} in short) if it satisfies the following property for all given $i$ and $m$ (where $1\le i\le m \le k$): Imagine that we choose $m$ elements (denoted as $t_1$ through $t_m$) from $S$ uniformly randomly with replacement. Then with probability at least $1-\delta$:
\begin{enumerate}
\item The multi-set $T = \{t_1,\ldots, t_m\}$ is compatible with all $\lambda\in S\setminus T$, and
\item The multi-set $T_{\bar{i}} = \{t_1, \ldots, t_{i-1}, t_{i+1}, \ldots, t_m\}$ is compatible with $t_i$.
\end{enumerate}
\end{definition}

\subsection{Encoding/Decoding Masking Strings}
\label{sec:codingmasking}

BMC has separate encoding/decoding algorithms for masking strings (Algorithm~\ref{alg:masking}) and for data items (Algorithm~\ref{alg:data}).
The senders and the receivers will first invoke Algorithm~\ref{alg:masking} and then invoke Algorithm~\ref{alg:data}.

\begin{algorithm}
\caption{\label{alg:masking}
Encoding/decoding of masking strings. $S$ is an LCS of size $\frac{2k}{\delta}$, where $\delta$ is a tunable parameter.}
\small
\begin{algorithmic}[1]
\Statex \hspace*{-6mm} {\em Encoding algorithm}
\State $\lambda\leftarrow $ A uniformly random element from the set $S$;
\State return $\lambda$ after replacing ``0'' bits in $\lambda$ with ``blank'' bits;
\Statex
\setcounter{ALG@line}{0}
\Statex \hspace*{-6mm} {\em Decoding algorithm}
({\bf input:} A received binary string $z$ of $4kw$ bits;
{\bf output:} A list of masking strings)
\State {\bf foreach} $\lambda\in S$ {\bf do}
\State \hspace*{4mm} {\bf if} $\lambda\cdot z \ge \frac{3w}{4}$ {\bf then} output $\lambda$;
\end{algorithmic}
\normalsize
\end{algorithm}

Algorithm~\ref{alg:masking} has an LCS $S$ of size $\frac{2k}{\delta}$ hardcoded into it. The algorithm has each sender select a uniformly random masking string from $S$, and then send to the receiver. The decoding part does an exhaustive enumeration of all $\lambda$ in $S$. As long as the inner product of $\lambda$ and the received string $z$ is at least $\frac{3w}{4}$, the algorithm will claim that $\lambda$ has been sent by some sender.\footnote{The algorithm does not intend to determine the id of the sender --- the id (if needed) can be included as part of the data item in Algorithm~\ref{alg:data}.} Note that in each slot in this algorithm, the receiver has the prior knowledge that either i) all senders send ``blank'' bits or ii) some of them (potentially none) send ``blank'' bits while all the remaining ones send ``1'' bits.

%\Paragraph{Provable guarantees.}
%The next proves that Algorithm~\ref{alg:masking} always outputs any masking string that has been sent. It also guarantees that with probability $1-\delta$, it will not output any ``ghost'' masking string that was not sent by anyone.

\subsection{Encoding/Decoding Data Items}
\label{sec:codingdata}

\begin{algorithm}
\caption{\label{alg:data}
Encoding/decoding of data items.}
\small
\begin{algorithmic}[1]
\Statex \hspace*{-6mm} {\em Encoding algorithm}
({\bf input:} A data item;
{\bf output:} A codeword)
\State encode the data item into $w$ RS symbols, with a coding rate of $\frac{1}{2}$;
\State let $x$ be the resulting RS codeword, and let $\lambda$ be the masking string returned by the encoding part in Algorithm~\ref{alg:masking};
\State $\tau\leftarrow$ empty string;
\State {\bf for $i$ from $1$ to $4kw$ do}
\State \hspace*{4mm} {\bf if} $\lambda[i] =1$ {\bf then}
remove $x$'s first RS symbol, and append it to $\tau$;
\State \hspace*{4mm} {\bf else} append a ``blank'' RS symbol to $\tau$;
\State return $\tau$;
\Statex
\setcounter{ALG@line}{0}
\Statex \hspace*{-6mm} {\em Decoding algorithm}
({\bf input:} A received string $z$ with $4kw$ RS symbols, and a list $T$ of decoded masking strings returned by Algorithm~\ref{alg:masking};
{\bf output:} A list of data items)
\State {\bf foreach} $\lambda\in T$ {\bf do}
\State \hspace*{4mm} $x \leftarrow$ empty string;
\State \hspace*{4mm} {\bf for $i$ from $1$ to $4kw$ do}
\State \hspace*{4mm}\hspace*{4mm} {\bf if} ($\lambda[i] = 1$) and
 (there exists no $\lambda'\in T$
\Statex \hspace*{7mm} $\mbox{ such that } \lambda' \ne \lambda \mbox{ and } \lambda'[i] =1$) {\bf then}
\State \hspace*{4mm}\hspace*{4mm}\hspace*{4mm}
append the $i$-th RS symbol in $z$ to $x$;
%\State \hspace*{4mm}\hspace*{4mm} {\bf endif}
\State \hspace*{4mm} {\bf endfor}
\State \hspace*{4mm} $y\leftarrow $ RS decoding result of $x$;
\State \hspace*{4mm} {\bf if} CRC check passes on $y$ {\bf then} output $y$;
\State {\bf endfor}
\end{algorithmic}
\normalsize
\end{algorithm}

\Paragraph{Encoding.}
Algorithm~\ref{alg:data} is for encoding/decoding data items. A sender first computes a CRC on its original data item. From this point on in this paper, whenever we refer to a ``data item'', we include its CRC. The data item will then be encoded using Reed-Solomon (RS) code~\cite{lin04ecc} with a coding rate of $\frac{1}{2}$, into total $w$ RS symbols.
%\footnote{Our design can also use codes other than RS code, as long as the code can tolerate erasures. For example, one could alternatively use other kinds of maximum distance separable (MDS) code.}
Here the value of $w$ and the RS symbol size $u$ (in bytes) should satisfy $2\cdot d\le w\cdot u$, so that the RS codeword is sufficiently long to accommodate the encoded $d$-byte data item. The values of $w$ and $u$ should also satisfy the inherent constraint~\cite{lin04ecc} of $w\le 2^{8u}-1$ in RS codes. For any given $d$, there are actually infinite number of $(w,u)$ pairs satisfying the above two requirements.
Among all such pairs, BMC chooses the $(w,u)$ pair with the smallest $u$ value, with tie-breaking favoring smaller $w$. This gives us a unique $(w,u)$ pair for the given $d$. One can easily verify that for $d = \omega(\ln^2 N\times \ln\ln N)$, our chosen $w$ must be $\omega(\ln^2 N)$ and chosen $u$ will be $\Theta(\ln w)$.

Recall that a sender has already chosen a masking string of $4kw$ bits in Algorithm 1. With this masking string and with the RS codeword constructed above, a sender constructs a new string $\tau$ with total $4kw$ RS symbols. There are exactly $w$ locations where $\lambda$ has the ``1'' bit. The sender embeds the $w$ RS symbols from the RS codeword into those corresponding $w$ locations of $\tau$. For each of the remaining locations, $\tau$ will have a ``blank'' RS symbol consisting of $8u$ ``blank'' bits.

\Paragraph{Decoding.}
A receiver will receive a string $z$ with total $4kw$ RS symbols. Note that the receiver already has a list $T$ of masking strings, as output by Algorithm~\ref{alg:masking}. For each $\lambda \in T$, the receiver tries to decode the corresponding data item (Line 2 to 8 in the decoding part of Algorithm~\ref{alg:data}). Given $\lambda$, the algorithm will include the $i$-th RS symbol in $z$ for the purpose of RS decoding, iff $\lambda$ is the only masking string in $T$ that has a ``1'' bit in the $i$-th location.
%(see Line 4 in the decoding part of Algorithm~\ref{alg:data}).
Note that based on $T$, the algorithm already knows at which locations $\lambda$ will ``collide'' with other masking strings. This enables the algorithm to treat the collided locations as {\em erased} RS symbols and ignore them. Doing so helps to decrease the redundancy needed in the RS code, as compared to simply treating those RS symbols as {\em erroneous}.

One can see that the above process only relies on those ``non-collision'' slots. For each such slot, the receiver has the prior knowledge that exactly one sender sends a non-``blank'' bit and all other senders send ``blank'' bits.
%This enables the receiver to properly demodulate.
Finally, the CRC serves to deal with the case where Algorithm 1 returned a spurious masking string not sent by anyone. While the probability of this happening is only $\delta$, in practice, checking the CRC helps to further reduce the possibility of
Algorithm 2 returning a spurious data item due to a spurious masking string.

\subsection{Final Provable Guarantees of BMC}
\label{sec:guarantees}

\Paragraph{Achieving $\mathbf{R}=\Theta(1)$.}
Theorem~\ref{the:main} next proves that with probability at least $1-kN\delta$,
using BMC achieves $\mathbf{R}=\Theta(1)$. (Recall that $\delta$ is tunable and can be set to $o(\frac{1}{kN})$.)
This guarantee is strong in the sense that {\em all} receivers simultaneously succeed in decoding {\em all} their respective data items --- this requires our analysis to invoke a union bound across all the (up to $N$) receivers in the analysis.

The proof of the theorem actually also shows that if we consider any given receiver, then the probability of it successfully decoding {\em all} its data items using BMC will be $1-k\delta$. Hence for any given receiver, in order for this probability to approach $1$, having $\delta = o(\frac{1}{k})$ already suffices.

\begin{theorem}
\label{the:main}
Consider any wireless network with $N$ nodes, where some of the nodes are senders and the remaining ones are receivers. Each sender has a $d$-byte data item that needs to be sent to all its neighboring receivers. Let $k$ be the maximum degree (i.e. number of neighboring senders) of a receiver in the network. Let $u$ denote the Reed-Solomon symbol size (in bytes) used in Algorithm 2, and assume that $u\ge 1$. Let one {\em byte of airtime} be the time needed to transmit one byte. Then assuming the existence of a $(k,w,\delta)$ LCS of size $\frac{2k}{\delta}$ and using Algorithm 1 and 2:
\begin{enumerate}
\item Within at most $9kd$ bytes of airtime, all senders will complete their transmissions .
\item With probability at least $1-kN\delta$, all receivers in the network will output all the data items sent by their respective neighboring senders and output no other items, hence achieving $\mathbf{R}\ge \frac{1}{9}$.
\end{enumerate}
\vspace*{-3mm}
\end{theorem}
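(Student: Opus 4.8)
The plan is to prove the two claims separately. Claim~1 is a direct airtime count, while claim~2 reduces the whole protocol, over an \emph{arbitrary} topology, to a single use of the LCS guarantee (Definition~\ref{def:lcs}) together with two union bounds.

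\emph{Airtime and the rate bound.} First I would add up the two phases. In phase one each sender transmits a $(k,w)$ masking string, i.e.\ $4kw$ bits $=kw/2$ bytes; in phase two each sender transmits $\tau$, which is $4kw$ RS symbols of $u$ bytes each, i.e.\ $4kwu$ bytes. Since all senders transmit synchronously the total airtime is $\frac{kw}{2}+4kwu$ bytes. The rate-$\frac12$ RS codeword occupies $wu=2d$ bytes (the data item being padded as needed; the $o(kd)$ slack from integer rounding in the $(w,u)$ choice of Section~\ref{sec:codingdata} can be absorbed since $d=\omega(\ln^2 N\ln\ln N)$ and $u=\Theta(\ln w)$), so $u\ge1$ gives $w\le2d$ and the airtime is $\frac{kw}{2}+8kd\le kd+8kd=9kd$ bytes, proving claim~1. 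For the rate, the busiest receiver has degree $k$ and thus receives $8kd$ useful bits against at most $72kd$ bits of airtime, so $\mathbf R\ge8kd/72kd=\frac19$ whenever that receiver decodes correctly.

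\emph{Reduction to the LCS guarantee.} Next I would fix an arbitrary receiver $r$ and let $t_1,\dots,t_m$, with $m\le k$, be the masking strings its $m$ neighbors draw from $S$. Since each sender draws uniformly and independently, the multiset $T=\{t_1,\dots,t_m\}$ is exactly an i.i.d.\ sample of size $m$ with replacement from $S$, and $r$'s received signal in both phases is a function of $T$ alone --- this is exactly why the topology is irrelevant. Applying Definition~\ref{def:lcs} once for each $i\in\{1,\dots,m\}$ at this value of $m$ gives, each time with probability $\ge1-\delta$, that (i) $T$ is compatible with every $\lambda\in S\setminus T$ and (ii) $T_{\bar i}$ is compatible with $t_i$; a union bound over the $m$ indices then produces an event $E_r$ with $\Pr[E_r]\ge1-m\delta\ge1-k\delta$ on which (i) holds and (ii) holds for every $i$ at once. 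Note $E_r$ already forces $t_1,\dots,t_m$ to be pairwise distinct, since a repeated string would contribute a term $t_i\cdot t_i=w>\frac w2$ to the sum controlled by (ii).

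\emph{Correctness on $E_r$, and finishing.} I would then verify that $E_r$ implies $r$ decodes perfectly. In phase one $r$'s received string is $z=\bigvee_j t_j$, the OR being realized by BMC's second demodulation assumption. A genuine $\lambda=t_j\in T$ has $\lambda\cdot z=w\ge\frac{3w}{4}$ and is output; a spurious $\lambda\in S\setminus T$ has $\lambda\cdot z=\bigl|\bigcup_j\{\ell:\lambda[\ell]=t_j[\ell]=1\}\bigr|\le\sum_j\lambda\cdot t_j\le\frac w2<\frac{3w}{4}$ by (i) and is rejected, so Algorithm~1 outputs exactly $T$. In phase two, fixing $t_i\in T$, a position $\ell$ with $t_i[\ell]=1$ is used by Algorithm~2 unless some other $t_j$ ($j\ne i$) also has $t_j[\ell]=1$, and by (ii) the number of such collided positions is $\sum_{j\ne i}t_i\cdot t_j\le\frac w2$; hence at least $w-\frac w2=\frac w2$ symbols are used, at each of which exactly one neighbor (sender $i$) transmits a non-blank RS symbol while all others are blank, so by BMC's first demodulation assumption $r$ reads that symbol correctly, the $\le\frac w2$ collided positions being declared erasures. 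A rate-$\frac12$ RS code corrects any $\frac w2$ erasures, so $r$ recovers sender $i$'s data item exactly, its CRC passes, and it is output; since this holds for all $i$ and Algorithm~2 ranges only over $T$, $r$ outputs precisely the $m$ data items of its neighbors and nothing else. A final union bound over the at most $N$ receivers converts the per-receiver failure probability $k\delta$ into $kN\delta$, which yields claim~2. The step I expect to need the most care is this reduction to the LCS property: one must see that no independence across receivers is required, that duplicate masking strings are excluded automatically by clause~(ii), that the two demodulation assumptions get invoked at exactly the right points (the second for the phase-one OR, the first for the single-sender reads in phase two), and that ``compatible, i.e.\ $\le\frac w2$ collisions'' lines up with ``rate-$\frac12$ RS corrects $\frac w2$ erasures''; the airtime count's only wrinkle is the $(w,u)$ integer rounding, absorbed by the parameter regime.
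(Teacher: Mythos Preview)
Your proposal is correct and follows essentially the same route as the paper: an airtime count using $wu=2d$ and $u\ge 1$ to get the $9kd$ bound, then fixing a receiver, invoking the LCS definition once per neighbor and union-bounding over $i$ to obtain the good event, verifying on that event that Algorithm~1 outputs exactly $T$ and that each $t_i$ suffers at most $w/2$ erasures so the rate-$\tfrac12$ RS decoding succeeds, and finally union-bounding over all receivers. Your explicit remark that clause~(ii) forces the $t_i$ to be pairwise distinct is a detail the paper leaves implicit, and your caveat about integer rounding in the $(w,u)$ choice is more careful than the paper (which simply sets $w=2d/u$), but neither changes the structure of the argument.
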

\begin{proof}
The value of $w$ in Algorithm 1 and 2 will be $w = \frac{2d}{u}$. Algorithm 1 takes $4kw = \frac{8kd}{u}$ bits (or $\frac{kd}{u}$ bytes) of airtime. Algorithm 2 sends $4kw$ RS symbols, incurring $4kw\times u = 8kd$ bytes of airtime. Hence the total airtime is $\frac{kd}{u} + 8kd = (8+\frac{1}{u})kd \le 9kd$ bytes.

We next move on to the second claim in the theorem. We will prove that for any given receiver $X$, with probability at least $1-k\delta$, it will output and only output all the data items sent by its neighboring senders. Taking a union bound across all receivers will immediately lead to the second claim in the theorem.

Let $m$ ($m\le k$) be the number of neighboring senders of $X$. For $1\le i\le m$, let $t_i$ be the maskings string chosen by neighbor $i$. For any given $i$, by the definition of LCS (Definition~\ref{def:lcs}), with probability at least $1-\delta$, we have i) $T =\{t_1, t_2, \ldots, t_m\}$ is compatible with all $\lambda \in S\setminus T$, and ii) $T_{\bar{i}} =\{t_1, \ldots, t_{i-1}, t_{i+1}, \ldots, t_m\}$ is compatible with $t_i$. By a union bound across all $i$ ($1\le i\le m\le k$), we know that with probability at least $1-k\delta$, the above two properties hold for all $i$. Let $\mathcal{E}$ denote such a random event, and then $\Pr[\mathcal{E}]\ge 1-k\delta$. It suffices to prove that conditioned upon $\mathcal{E}$, $X$ will output and only output the data items sent by its $m$ neighboring senders.
All our following discussions will condition on $\mathcal{E}$.

It suffices to prove that i) $X$ will output at most $m$ data items, and ii) for any neighboring sender $Y$ of $X$, $X$ will output the data items sent by $Y$. For the first part, note that conditioned upon $\mathcal{E}$, the multi-set $T$ is compatible with all $\lambda \in S\setminus T$. This means for all $\lambda \in  S\setminus T$, in Step 2 of the decoding part in Algorithm 1, we will have $\lambda\cdot z \le \sum_{i=1}^m (\lambda \cdot t_i) \le \frac{w}{2} < \frac{3w}{4}$, and hence Algorithm 1 will not output $\lambda$. Thus Algorithm 1 and 2 will output at most $m$ masking strings and $m$ data items, respectively.

We move on to prove the second part, and consider any neighboring sender $Y$ of $X$. Without loss of generality, assume $t_1$ is the masking string chosen by $Y$. In the decoding part of Algorithm~\ref{alg:masking}, in any slot where $t_1$ is $1$, $X$ will see either a ``1'' bit or the collision of multiple ``1'' bits. By the assumption on demodulation, the demodulation on $X$ will return a ``1'' bit for such a slot. Hence in the decoding part of Algorithm 1, we must have $t_1\cdot z = w > \frac{3w}{4}$, and Algorithm~\ref{alg:masking} must output $t_1$.
Next since the multi-set $\{t_2, t_3, \ldots, t_m\}$ is compatible with $t_1$, there will be at most $\frac{w}{2}$ possible $r$'s such that $t_1[r] = 1$ and $t_j[r] =1$ for some $j$ where $2\le j\le m$. Hence out of the total $w$ non-``blank'' RS symbols sent by $Y$, the receiver $X$ will obtain at least $w-\frac{w}{2} = \frac{w}{2}$ RS symbols at Step 5 in the decoding part of Algorithm 2. Since the RS coding rate was $\frac{1}{2}$, the RS decoding must succeed at Step 7, and the CRC checking must pass at Step 8. Hence Algorithm 2 will output the data item sent by $Y$.
\end{proof}

\Paragraph{Complexity of BMC encoding/decoding.}
We prove that the space and time complexities of Algorithm 1 and 2 are all low-order polynomials:
\begin{theorem}
\label{the:complexity}
The space complexity of Algorithm 1 and 2 combined is $O(\frac{kd}{\delta}\ln k)$. Let $\alpha$ ($\beta$) be the RS and CRC encoding (decoding) time complexity for one data item.
With probability of at least $1-k\delta$ and amortized for each data item, the encoding time complexity of Algorithm 1 and 2 combined is $O(kd+\alpha)$, and the decoding time complexity is $O(\frac{d}{\delta \ln d} + \frac{kd}{\ln d} + \beta)$.
\end{theorem}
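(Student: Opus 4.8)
The plan is to analyze the three resources — space, encoding time, and decoding time — essentially by directly reading off the cost of each line in Algorithm~1 and Algorithm~2, and then multiplying by the number of iterations of each loop. For the space complexity, the dominant object is the hardcoded LCS $S$, which contains $\frac{2k}{\delta}$ masking strings, each of length $4kw$ bits. However, a $(k,w)$ masking string is a concatenation of $w$ blocks of length $4k$ each having Hamming weight one, so each string is describable by $w$ indices into $\{1,\ldots,4k\}$, i.e.\ $O(w\ln k)$ bits. Recall from Section~\ref{sec:codingdata} that $w=\frac{2d}{u}$ with $u\ge 1$, so $w=O(d)$ and each string costs $O(d\ln k)$ bits; times $\frac{2k}{\delta}$ strings gives $O(\frac{kd}{\delta}\ln k)$, which dominates the $O(kd)$ working storage for $z$, $\tau$, $x$, and the output lists. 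This settles the space bound.

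For encoding, I would walk through Algorithm~1's encoder (pick a random element of $S$ — $O(1)$ with an index, or $O(d\ln k)$ to copy it out — and relabel, $O(kw)=O(kd)$ over the full length) and Algorithm~2's encoder (RS$+$CRC encode, cost $\alpha$; then a single pass of length $4kw$ writing RS symbols or blanks, cost $O(kw\cdot u)=O(kd)$). Summing gives $O(kd+\alpha)$ per data item. The ``with probability $1-k\delta$'' qualifier presumably is not actually needed for encoding (encoding cost is deterministic), so I expect this part of the statement to follow unconditionally; I would state it as such or note that the event $\mathcal{E}$ from the proof of Theorem~\ref{the:main} is what the $1-k\delta$ refers to, under which the later accounting is clean.

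For decoding — the most delicate bound — the key point is to count work against the number of masking strings the receiver actually processes. Algorithm~1's decoder loops over all $\frac{2k}{\delta}$ strings $\lambda\in S$ and computes $\lambda\cdot z$; since $\lambda$ has only $w$ ones, each inner product costs $O(w)$ if we iterate over the support of $\lambda$, and reading an RS symbol of $z$ and testing a bit costs $O(u)=O(\ln w)=O(\ln d)$; so Algorithm~1 decoding is $O(\frac{k}{\delta}\cdot w\cdot\frac{1}{?})$ — here I need to be careful: the claimed term is $O(\frac{d}{\delta\ln d})$, so the intended accounting must be that each of the $w=O(d/\ln d)$ (using $u=\Theta(\ln w)=\Theta(\ln d)$, hence $w=2d/u=\Theta(d/\ln d)$) support positions costs $O(1)$, giving $O(\frac{k}{\delta}\cdot\frac{d}{\ln d}) = O(\frac{kd}{\delta\ln d})$ — but the theorem amortizes per data item, and under event $\mathcal{E}$ there are $m\le k$ data items delivered, so dividing by... no: amortization is over the $m$ output items, and $\frac{1}{m}\cdot\frac{kd}{\delta\ln d}$ need not be $\frac{d}{\delta\ln d}$ unless $m=\Theta(k)$. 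The correct reading, which I would make explicit, is that the $O(\frac{d}{\delta\ln d})$ term is the Algorithm~1 cost $O(\frac{k}{\delta}\cdot\frac{d}{\ln d})$ divided by the number of senders $k$ — i.e., amortization is charged against the $k$ neighboring senders (the problem size), not against the number of successfully decoded items. Then Algorithm~2's decoder: the outer loop runs over $|T|\le m\le k$ strings (under $\mathcal{E}$, by Theorem~\ref{the:main}'s proof $|T|=m$), and for each $\lambda\in T$ the inner loop of length $4kw$ must, for each position where $\lambda[i]=1$, check whether any other $\lambda'\in T$ has $\lambda'[i]=1$. Naively this is $O(kw\cdot|T|^2)$, far too large; the fix is to precompute, in one $O(|T|\cdot w)$ pass over the supports of all strings in $T$, a count $c[i]$ of how many strings in $T$ hit position $i$, after which each $\lambda$ is processed in $O(w)$ time using $c[i]$. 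This precomputation is $O(k\cdot\frac{d}{\ln d})$ total, i.e.\ $O(\frac{kd}{\ln d})$, and the per-$\lambda$ symbol-collection passes sum to the same; plus $|T|\le k$ RS$+$CRC decodings costing $\beta$ each — but the theorem writes just $\beta$ (one decoding), so again $\beta$ must be the amortized-per-item cost, consistent with charging $|T|\beta$ over $|T|$ items. Summing: $O(\frac{d}{\delta\ln d}) + O(\frac{kd}{\ln d}) + O(\beta)$ amortized, matching the statement.

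The main obstacle I anticipate is precisely this bookkeeping about \emph{what} is being amortized over \emph{what}, and getting the $u=\Theta(\ln w)=\Theta(\ln d)$ substitution to turn the raw bounds ($O(\frac{kw}{\delta})$ and $O(kw)$ symbol-operations, each operation costing $O(u)$ bit-operations but the symbol itself being the natural unit) into the stated $\frac{d}{\delta\ln d}$ and $\frac{kd}{\ln d}$ forms. I would first nail down $w=\Theta(d/\ln d)$ and $u=\Theta(\ln d)$ from the $(w,u)$-selection rule of Section~\ref{sec:codingdata}, then do the per-line accounting for encoding (easy, deterministic), then handle decoding by (a) the $O(w)$-per-string inner-product cost in Algorithm~1 times $\frac{2k}{\delta}$ strings, (b) the one-pass $c[i]$ precomputation trick making Algorithm~2's collision tests linear, and (c) folding in the $\alpha,\beta$ black-box RS/CRC costs — finally dividing the $k$-dependent pieces by $k$ to amortize. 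The role of the event $\mathcal{E}$ (probability $\ge 1-k\delta$, from Theorem~\ref{the:main}) is just to guarantee $|T|=m\le k$ so the outer loops of Algorithm~2 are bounded; without $\mathcal{E}$, $|T|$ is still at most $\frac{2k}{\delta}$ deterministically, so even the unconditional decoding bound is only a $\frac{1}{\delta}$ factor worse, which I would remark on.
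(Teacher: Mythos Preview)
Your proposal is correct and follows the same outline as the paper: space is dominated by storing $|S|=\frac{2k}{\delta}$ masking strings at $O(w\log k)$ bits each; Algorithm~1 decoding is $|S|$ inner products of cost $O(w)=O(\frac{d}{\ln d})$ each; Algorithm~2 decoding is bounded using $|T|\le k$, which holds with probability at least $1-k\delta$ by the proof of Theorem~\ref{the:main}; and amortization divides the totals by $k$.

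The one place you diverge is the inner loop of Algorithm~2. You treat the direct approach as costing $O(kw\cdot|T|^2)$ (iterating over all $4kw$ positions) and then introduce a $c[i]$ precomputation to salvage the bound. The paper instead simply notes that for each $\lambda\in T$ one need only visit the $w$ support positions of $\lambda$ and, at each, scan the other $|T|$ strings, giving $O(w|T|)$ per iteration and $O(w|T|^2)\le O(wk^2)$ in total --- which, amortized over $k$ items, is exactly the $O(\frac{kd}{\ln d})$ term. So the paper's ``naive'' route already suffices for the stated bound; your precomputation is sound (and actually yields a tighter $O(wk)$ total, i.e.\ $O(\frac{d}{\ln d})$ amortized, which you then loosen back to match the theorem), but it is not needed.
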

\begin{proof}
The only non-trivial space complexity in Algorithm 1 and 2 is for storing the LCS $S$. $S$ has $\Theta(\frac{k}{\delta})$ masking strings, where each masking string takes $w\log_2 (4k) = \Theta(w\log k)$ bits to store.
%Note that such a representation of the masking strings will also help reduce the time complexity, as analyzed next.
%Regardless of the RS symbol size, we will always have $w = O(d)$.
Hence the total space complexity is $\Theta(\frac{k}{\delta}w\log k) = O(\frac{kd}{\delta}\log k)$.

The encoding time complexity is obvious. For decoding in Algorithm 1, we need to compute an inner product between $z$ and every $\lambda$ in $S$. To do so, we use the $w$ positions of the ``1'' bits in $\lambda$ to index into $z$. This will lead to $O(w) = O(\frac{d}{\ln d})$ (since $u = \Theta(\ln w) = \Theta(\ln d)$) complexity for each $\lambda$, or $O(\frac{kd}{\delta \ln d})$ for all $\lambda \in S$. The decoding in Algorithm 2 has total $|T|$ iterations. In each iteration, it constructs an $x$ while incurring $O(w|T|) = O(\frac{d}{\ln d} |T|)$  complexity, and then invokes RS and CRC decoding on $x$. By the proof of Theorem~\ref{the:main}, we know that with probability at least $1-k\delta$, $|T|\le k$. Hence the time complexity of Algorithm 1 and 2 combined will be $O(\frac{kd}{\delta \ln d} + k(\frac{kd}{\ln d} + \beta))$, or
$O(\frac{d}{\delta \ln d} + \frac{kd}{\ln d} + \beta)$ when amortized for each data item.
\end{proof}

\subsection{Practical Considerations}
\label{sec:practical}

\Paragraph{Errors during transmission.}
To facilitate understanding, so far we have not considered errors in transmission. Tolerating errors turns to be straightforward in Algorithm 1 and 2. First, Algorithm 1 actually already tolerates $\frac{w}{4}-1$ errors. The reason is that when there are no errors, for a masking string $\lambda$ that was sent by some sender, we have $\lambda\cdot z = w$. While for a $\lambda$ not sent by anyone, we will have $\lambda\cdot z \le \frac{w}{2}$. Hence, there is already a gap of $\frac{w}{2}$ for accommodating errors. Second, Algorithm 2 already uses RS coding internally. To tolerate errors in transmission, we can naturally add more redundancy in the RS code.

\Paragraph{Overhead of sending masking strings.}
In BMC, the senders need to first send masking strings before sending their data items. Such extra overhead turns out to be small: A sender sends $4kw$ RS symbols for the data item, and $4kw$ bits for the masking string. For 2-byte RS symbols, the overhead of sending the masking string is only $6.25$\% of that for the data item. Such overhead further decreases as RS symbol size increases. Furthermore, in practice, masking strings do not need to be re-sent for every data item. If the network topology never changes, then the masking strings only need to be sent once, and never need to be re-sent. Otherwise if the higher-level protocol is capable of detecting topology changes (e.g., when a sender newly moves into the communication range of a receiver), then the higher-level protocol can initiate/schedule the re-sending of masking strings in the network in response to such changes.

%would be needed to coordinate the distribution of masking strings in response to a topology change and to schedule the beginning of a new BMC communication phase.

%one could resend the masking string only periodically --- if we send masking strings every $5$ seconds, it simply means that a new sender moving into the neighborhood a receiver will experience up to $5$ seconds delay before the receiver first receives data items from it. Finally, if

%Hence, if the masking string is sent once for every $100$ data items, the relative overhead will be only $0.0625\%$.
%Detecting topology changes

\section{Finding a Low Collision Set}
\label{sec:lcs}

BMC (more precisely, Theorem~\ref{the:main}) critically relies on the possibility of finding an LCS. This section will confirm that LCS indeed exists and can be found. Specifically, we will show that if we construct a multi-set in a certain randomized way, then with probability close to $1$, this multi-set will satisfy some {\em sufficient condition} for being an LCS and hence must be an LCS. We will further show that one can verify, in polynomial time, whether a multi-set satisfies this sufficient condition.
We remind the reader that the LCS is constructed prior to the deployment of BMC, and needs to be done only once.

%We also note that LCS is closely related to the notion of error-correcting group testing~\cite{macula97},
%especially its various relaxed versions~\cite{atia12,chan14,cheraghchi11,cheraghchi13,ngo11,zhigljavsky03}. Section~\ref{sec:related} provided a detailed discussion on this.

%\begin{algorithm}
%\caption{
%Constructing a multi-set of random masking strings.
%({\bf input:} $k$, $w$, $\delta$;
%{\bf output:} A multi-set $S$ of masking strings)}
%\small
%\begin{algorithmic}[1]
%\State $S\leftarrow \emptyset$;
%\State {\bf repeat} $\frac{2k}{\delta}$ times {\bf do}
%\State \hspace*{4mm} $\lambda\leftarrow$ length-$4kw$ string of $0$'s;
%\State \hspace*{4mm} {\bf for} $i$ from $0$ to $w-1$ {\bf do}
%\State  \hspace*{4mm}\hspace*{4mm} $j\leftarrow $ uniformly random integer in $[4ki+1, 4ki+4k]$;
%\State  \hspace*{4mm}\hspace*{4mm} $\lambda[j]\leftarrow 1$;
%\State \hspace*{4mm} {\bf endfor}
%\State \hspace*{4mm} add $\lambda$ to $S$;
%\State {\bf endrepeat}
%\State return $S$;
%\end{algorithmic}
%\normalsize
%\end{algorithm}

\subsection{A Random Construction}
\label{sec:construct}

We use the following (simple) way of constructing a multi-set $S$ of $\frac{2k}{\delta}$ random masking strings, each of which is constructed independently.~\footnote{$S$ may contain duplicates, and hence it is a multi-set. Later we will prove that with good probability, $S$ actually has no duplicates.} To construct a random masking string with $4kw$ bits, for each $4k$-bit segment of the string, we set a uniformly random bit in the segment to be ``1'' and all remaining bits to be ``0''.

\subsection{Overview of Proof}

We want to show that with probability at least $0.95$, the multi-set returned by the above construction is an LCS. Despite the simplicity of the construction, the reasoning is rather complex because there are two random processes involved: The construction is random, while the definition of LCS (Definition~\ref{def:lcs}) also involves its own separate random process.

To decouple these two random processes, we will define another concept of {\em promising sets} (see Section~\ref{sec:promisingset}).
%A set of masking strings is a {\em promising set} iff it satisfies certain conditions (as defined later).
Different from LCS, the definition of a promising set involves only deterministic properties. Also as an important consequence, we will be able to verify, deterministically in polynomial time, whether a set is a promising set or not. In contrast, it is unclear how one can check (in polynomial time) whether a set is an LCS.
We will then later prove:
\begin{description}
\item[Claim 1.] With probability at least $0.95$, the multi-set returned by the random construction in Section~\ref{sec:construct} is a promising set (Theorem~\ref{the:promisingexist}).
\item[Claim 2.] A promising set must be an LCS (Theorem~\ref{the:promising2low}) --- namely, being a promising set is a {\em sufficient condition} for being an LCS.
\end{description}
We will only prove the above two claims for a certain given (small) $w$ value --- Theorem~\ref{the:promisingexist} and \ref{the:promising2low} only prove\footnote{Recall from Section~\ref{sec:overview} that we assume $d = \omega(\ln^2 N\times \ln\ln N)$. Section~\ref{sec:codingdata} further mentioned that $d = \omega(\ln^2 N\times \ln\ln N)$ implies $w = \omega(\ln^2 N)$. One can easily verify that as long as $\delta$ is not too small (e.g., as long as $\delta > \frac{1}{N^{5}}$), $w$ will be larger than $20(\ln\frac{k}{\delta})(\ln\frac{2k}{\delta^2})$ asymptotically.} for $w = 20(\ln\frac{k}{\delta})(\ln\frac{2k}{\delta^2})$.
This is because given an LCS for a small $w$ value,
we can easily get an LCS for larger $w$ values, by trivially extending each masking string:
\begin{theorem}
\label{the:extend}
Given any $(k, w, \delta)$ low collision set $S$ and any positive integer $c$, we can always construct a $(k, cw, \delta)$ low collision set $S^c$.
\end{theorem}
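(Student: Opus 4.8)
The plan is to construct $S^c$ by "coordinate-wise concatenation": take the $\frac{2k}{\delta}$ elements of $S$, and for each element form a new $(k, cw)$ masking string by concatenating $c$ copies of it. Concretely, if $S = \{\lambda_1, \ldots, \lambda_{2k/\delta}\}$, then $S^c = \{\lambda_1^c, \ldots, \lambda_{2k/\delta}^c\}$ where $\lambda_j^c$ denotes the $c$-fold concatenation $\lambda_j \lambda_j \cdots \lambda_j$. Since each $\lambda_j$ is the concatenation of $w$ length-$4k$ blocks of Hamming-weight $1$, the string $\lambda_j^c$ is the concatenation of $cw$ such blocks, so it is a valid $(k, cw)$ masking string. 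Note $|S^c| = |S| = \frac{2k}{\delta} = \frac{2k}{\delta}$, which is exactly the size required of a $(k, cw, \delta)$ LCS in Algorithm 1, so the size constraint is automatically met.

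The key observation driving the proof is a linearity of inner products under this concatenation: for any two elements $\lambda_j^c, \lambda_{j'}^c \in S^c$, we have $\lambda_j^c \cdot \lambda_{j'}^c = c\,(\lambda_j \cdot \lambda_{j'})$, since the $c$ blocks are identical copies. Consequently, for any multi-set $T^c = \{t_1^c, \ldots, t_m^c\}$ of elements of $S^c$ arising from a multi-set $T = \{t_1, \ldots, t_m\}$ of elements of $S$, and for any $\lambda^c \in S^c \setminus T^c$ (equivalently $\lambda \in S \setminus T$, using the index correspondence), we have
\[
\sum_{i=1}^m (\lambda^c \cdot t_i^c) \;=\; c \sum_{i=1}^m (\lambda \cdot t_i).
\]
Therefore $\sum_i (\lambda^c \cdot t_i^c) \le \frac{cw}{2}$ if and only if $\sum_i (\lambda \cdot t_i) \le \frac{w}{2}$; that is, $T^c$ is compatible with $\lambda^c$ (in the $(k, cw)$ sense) exactly when $T$ is compatible with $\lambda$ (in the $(k, w)$ sense). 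The same equivalence holds for the second condition in Definition~\ref{def:lcs}, comparing $T_{\bar i}^c$ with $t_i^c$.

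Next I would couple the two random experiments. Sampling $m$ elements uniformly at random with replacement from $S^c$ is, via the natural bijection between $S$ and $S^c$ (elements inherit their index), identical in distribution to sampling $m$ elements from $S$ and then concatenating each $c$ times. Under this coupling, the event that $T^c$ is compatible with all $\lambda^c \in S^c \setminus T^c$ and $T_{\bar i}^c$ is compatible with $t_i^c$ coincides exactly with the corresponding event for $S$. Since $S$ is a $(k, w, \delta)$ LCS, the latter event has probability at least $1 - \delta$ for every fixed $i, m$ with $1 \le i \le m \le k$; hence so does the former. This is precisely the defining property of a $(k, cw, \delta)$ LCS, completing the proof. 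The argument involves no real obstacle — the only point requiring a moment's care is the bookkeeping of the index correspondence so that "$\lambda^c \in S^c \setminus T^c$" and "$\lambda \in S \setminus T$" refer to matched elements (this is clean as long as one tracks multiset indices rather than string values, which is also how Definition~\ref{def:lcs} is phrased).
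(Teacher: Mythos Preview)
Your proposal is correct and follows essentially the same approach as the paper: the paper also defines $S^c = \{\lambda^c \mid \lambda \in S\}$ by $c$-fold repetition, observes that $\lambda^c \cdot \eta^c = c\,(\lambda \cdot \eta)$, derives the equivalence of the two compatibility conditions, and finishes with the same coupling argument. Your write-up is slightly more detailed (explicitly checking that $\lambda^c$ is a valid $(k,cw)$ masking string and noting the index bookkeeping), but the proof is the same.
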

\begin{proof}
Let $S^c = \{\lambda^c \,\,|\,\, \lambda\in S \}$, where $\lambda^c$ refers to repeating $\lambda$ for $c$ times. For all $\lambda$ and $\eta$, we obviously have $\lambda^c \cdot \eta^c = c\times (\lambda \cdot \eta )$. Let $t_i^c$ ($1\le i\le m$) be any masking string from $S^c$. It is easy to verify that: i) $T^c = \{t_1^c, \ldots, t_m^c\}$ is compatible with all $\lambda^c \in S^c\setminus T^c$ iff $T = \{t_1, \ldots, t_m\}$ is compatible with all $\lambda \in S\setminus T$, and ii) for all $i$, $T_{\bar{i}}^c = \{t_1^c, \ldots, t_{i-1}^c,$ \mbox{} $t_{i+1}^c, \ldots, t_m^c\}$ is compatible with $t_i^c$ iff $T_{\bar{i}} = \{t_1, \ldots, t_{i-1}, t_{i+1}, \ldots, t_m\}$ is compatible with $t_i$.
A simple coupling argument will then show that since $S$ is an LCS, $S^c$ must be an LCS as well.
\end{proof}

\ifthenelse{\boolean{shortversion}}{

\subsection{Formal Results}
\label{sec:promisingset}

\Paragraph{The concept of promising sets.}
Recall the definition of inner product ($\cdot$) from Section~\ref{sec:lcsdef}.
%For masking string $\lambda$ and multi-set $T = \{t_1,t_2, \ldots, t_m\}$ of masking strings, we define $\sigma(\lambda, T) = \sum_{i=1}^m \lambda\cdot t_i$. (The previous definitions will apply to set as well, since a set is just a special multi-set.)
Given a set $S$ of masking strings and any $\lambda\in S$, we define
$\mu(\lambda, S) = \frac{\sum_{s\in S \setminus \{\lambda\}} (\lambda\cdot s)}{|S|-1}$.
The following defines the concept of {\em promising sets}:
\begin{definition}
A set $S$ of $(k,w)$ masking strings is a {\em $(k,w, \delta)$ promising set} iff for all $\lambda\in S$, all the following equations hold:
\begin{eqnarray}
\label{eqn:proof1}
|\mu(\lambda, S) - \frac{w}{4k}| &<& \frac{0.04w}{4k} \\
\label{eqn:proof2}
\max_{s\in S \setminus \{\lambda\}} |\lambda\cdot s - \mu(\lambda, S)| &<& 4\ln\frac{k}{\delta} \\
\label{eqn:proof3}
\sum_{s\in S \setminus \{\lambda\}} (\lambda\cdot s - \mu(\lambda, S))^2
&<& (|S|-1) \frac{w}{5k} \ln\frac{k}{\delta}
\end{eqnarray}
\end{definition}

To get some intuition behind the above concept, note that $\mu(\lambda, S)$ is the average number of collisions between $\lambda$ and other masking strings in $S$. Equation~\ref{eqn:proof1} requires this average to be close to $\frac{w}{4k}$. Equation~\ref{eqn:proof2} requires the maximum number of collision to be close to this average.  Equation~\ref{eqn:proof3} bounds the ``variance'' of the number of collisions between $\lambda$ and other masking strings in $S$. The
values on the right-hand side of the equations
are carefully chosen such that i) the
random construction returns a promising set with good probability, and ii) a promising set must be an LCS.

\Paragraph{Formal theorems.}
The next two theorems show that i) with probability at least $0.95$, the multi-set returned by the random construction in Section~\ref{sec:construct} is a promising set, and ii) a promising set must be an LCS. For space constraints, the (lengthy) proofs of these two theorems are deferred to the full version~\cite{bondorf18} of this paper.

\begin{theorem}
\label{the:promisingexist}
Consider any $\delta$ where $0<\delta\le 0.02$, any $k$ where\footnote{The theorem requires $k\ge 6\ln\frac{2k}{\delta^2}$. Recall from Section~\ref{sec:overview} that we assume $k =  \omega(\ln N)$. One can easily verify that as long as $\delta$ is not too small (e.g., as long as $\delta > \frac{1}{N^{5}}$), $k$ will be larger than $6\ln\frac{2k}{\delta^2}$ asymptotically.} $k\ge 6\ln\frac{2k}{\delta^2}$, and $w = 20(\ln\frac{k}{\delta})(\ln\frac{2k}{\delta^2})$. With probability at least $0.95$, where the probability is taken over the random choices used in the construction, the multi-set $S$ constructed in Section~\ref{sec:construct} is a $(k, w, \delta)$ \linebreak promising set of size $\frac{2k}{\delta}$.
\end{theorem}

\begin{theorem}
\label{the:promising2low}
For all $0<\delta\le 0.02$, $k\ge 1$, and
$w = 20(\ln\frac{k}{\delta})(\ln\frac{2k}{\delta^2})$, a $(k, w, \delta)$  promising set $S$ of size $\frac{2k}{\delta}$ must be a $(k, w, \delta)$ LCS.
\end{theorem}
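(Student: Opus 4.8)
The goal is to prove Theorem~\ref{the:promising2low}: every $(k,w,\delta)$ promising set $S$ of size $\frac{2k}{\delta}$ (with $w = 20(\ln\frac{k}{\delta})(\ln\frac{2k}{\delta^2})$) is automatically a $(k,w,\delta)$ LCS. The plan is to fix an arbitrary pair $(i,m)$ with $1\le i\le m\le k$, draw $t_1,\dots,t_m$ uniformly at random with replacement from $S$, and bound the failure probability of each of the two compatibility requirements in Definition~\ref{def:lcs} by $\frac{\delta}{2}$ (or split some other way that sums to $\delta$). For requirement (2), that $T_{\bar i}$ is compatible with $t_i$, I would condition on the value of $t_i = \lambda \in S$ and study $\sum_{j\ne i}(\lambda\cdot t_j)$, which is a sum of $m-1\le k-1$ i.i.d.\ draws, each distributed as $\lambda\cdot s$ for $s$ uniform over $S\setminus\{\lambda\}$ (the ``with replacement'' makes this slightly delicate, since a repeated draw could equal $\lambda$ itself, contributing $\lambda\cdot\lambda = w$; I would handle this by noting such an event is rare, or by bounding it crudely). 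The mean of each term is $\mu(\lambda,S)$, which by Equation~\ref{eqn:proof1} is at most $\frac{1.04 w}{4k}$, so the expected total is at most $(k-1)\cdot\frac{1.04w}{4k} < \frac{1.05 w}{4}$, comfortably below $\frac{w}{2}$. I then need a concentration inequality to say the sum exceeds $\frac{w}{2}$ with probability at most (roughly) $\frac{\delta}{2k}$, so that a union bound over all $i$ (for requirement (2) applied to each $t_i$) costs only $\frac{\delta}{2}$.

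The natural tool is a Bernstein-type inequality: each term $\lambda\cdot t_j - \mu(\lambda,S)$ is bounded in absolute value by $4\ln\frac{k}{\delta}$ by Equation~\ref{eqn:proof2}, and the sum of variances over all of $S\setminus\{\lambda\}$ is controlled by Equation~\ref{eqn:proof3}, giving a per-draw variance at most $\frac{w}{5k}\ln\frac{k}{\delta}$. Bernstein then yields a tail bound of the form $\exp\!\big(-\frac{(\text{deviation})^2/2}{(m-1)\sigma^2 + (\text{deviation})\cdot b/3}\big)$ with deviation $\approx \frac{w}{2} - (k-1)\mu \ge$ some constant multiple of $w$, variance contribution $(m-1)\sigma^2 \le \frac{w}{5}\ln\frac{k}{\delta}$, and range $b = 4\ln\frac{k}{\delta}$. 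Both terms in the Bernstein denominator are $O(w\ln\frac{k}{\delta})$, so the exponent is $-\Omega(w/\ln\frac{k}{\delta}) = -\Omega(\ln\frac{2k}{\delta^2})$ once we plug in $w = 20(\ln\frac{k}{\delta})(\ln\frac{2k}{\delta^2})$; the constant $20$ (and the constants $0.04$, $\frac{1}{5}$, $\frac{k}{\delta}$ vs.\ $\frac{2k}{\delta^2}$ in the definitions) are presumably tuned so the exponent is at least $\ln\frac{2k}{\delta}$, making the tail at most $\frac{\delta}{2k}$. Requirement (1), compatibility of $T$ with every $\lambda\in S\setminus T$, is handled the same way but now with a union bound over all $|S|-1 < \frac{2k}{\delta}$ candidates $\lambda$: for a \emph{fixed} $\lambda\notin T$, $\sum_{j=1}^m(\lambda\cdot t_j)$ is a sum of $m\le k$ i.i.d.\ terms each with mean $\mu(\lambda,S)$ (here no self-collision subtlety since $\lambda\notin T$, though we should be careful that conditioning on $\lambda\notin T$ mildly perturbs the distribution of the $t_j$'s — this is a minor point, handled by a union bound the other way or by observing the conditioning only helps). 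The same Bernstein computation gives a tail of at most $\frac{\delta^2}{2k}$ per $\lambda$, and multiplying by $|S|-1 < \frac{2k}{\delta}$ gives $\frac{\delta}{2}$ (hence the appearance of $\frac{2k}{\delta^2}$ rather than $\frac{k}{\delta}$ inside the logarithm for $w$).

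I expect the main obstacle to be twofold: first, carefully setting up the concentration inequality over draws \emph{with replacement} so that Equations~\ref{eqn:proof2} and~\ref{eqn:proof3} — which are statements about $s$ ranging over $S\setminus\{\lambda\}$, i.e.\ sampling \emph{without} replacement in a single pass — legitimately bound the per-draw range and variance of the with-replacement sampling used in Definition~\ref{def:lcs}, and correctly accounting for the annoying case $t_j = \lambda$ in requirement (2) (where $\lambda\cdot\lambda = w$ blows up the bound); I would deal with this either by absorbing the self-collision probability $\le \frac{m-1}{|S|} \le \frac{\delta}{2}$ into the overall budget, or by redefining the truncated random variables and bounding the discrepancy. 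Second, the bookkeeping of constants: one must verify that with $w = 20(\ln\frac{k}{\delta})(\ln\frac{2k}{\delta^2})$, the slack $\frac{w}{2} - k\mu$ is large enough and the Bernstein exponent clears $\ln\frac{2k}{\delta}$ (resp.\ $\ln\frac{2k}{\delta^2}$) with room to spare, using $\delta\le 0.02$ to make the $0.04$ fudge factor and the various ratios work out. Once those are in place, a final union bound — $\frac{\delta}{2}$ from requirement (1) over all $\lambda\in S\setminus T$, plus $\frac{\delta}{2}$ from requirement (2) over all $i\in\{1,\dots,m\}$ — shows that with probability at least $1-\delta$ both LCS conditions hold for the fixed $(i,m)$, which is exactly Definition~\ref{def:lcs}.
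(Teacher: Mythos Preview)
Your approach matches the paper's: Bernstein's inequality applied to the centered variables $Y_j = \lambda\cdot t_j - \mu(\lambda,S)$, with the self-collision event $t_i \in T_{\bar i}$ for requirement~(2) handled separately (the paper bounds it by $\frac{k-1}{|S|} < 0.5\delta$, exactly as you propose, and then conditions on its complement). Two minor corrections: the LCS definition already fixes $i$, so your union bound over $i\in\{1,\dots,m\}$ for requirement~(2) is unnecessary; and the conditioning on $\lambda\notin T$ (resp.\ $t_i\notin T_{\bar i}$) is not a ``mild perturbation'' to be tolerated but is \emph{precisely} what makes each remaining $t_j$ uniform over $S\setminus\{\lambda\}$, so that Equations~\ref{eqn:proof1}--\ref{eqn:proof3} directly supply the mean, range, and variance Bernstein needs.
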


}{
\subsection{The Concept of Promising Set}
\label{sec:promisingset}

Recall the definition of inner product ($\cdot$) from Section~\ref{sec:lcsdef}.
%For masking string $\lambda$ and multi-set $T = \{t_1,t_2, \ldots, t_m\}$ of masking strings, we define $\sigma(\lambda, T) = \sum_{i=1}^m \lambda\cdot t_i$. (The previous definitions will apply to set as well, since a set is just a special multi-set.)
Given a set $S$ of masking strings and any $\lambda\in S$, we define
$\mu(\lambda, S) = \frac{\sum_{s\in S \setminus \{\lambda\}} (\lambda\cdot s)}{|S|-1}$.
The following defines the concept of {\em promising sets}:
\begin{definition}
A set $S$ of $(k,w)$ masking strings is a {\em $(k,w, \delta)$ promising set} iff for all $\lambda\in S$, all the following equations hold:
\begin{eqnarray}
\label{eqn:proof1}
|\mu(\lambda, S) - \frac{w}{4k}| &<& \frac{0.04w}{4k} \\
\label{eqn:proof2}
\max_{s\in S \setminus \{\lambda\}} |\lambda\cdot s - \mu(\lambda, S)| &<& 4\ln\frac{k}{\delta} \\
\label{eqn:proof3}
\sum_{s\in S \setminus \{\lambda\}} (\lambda\cdot s - \mu(\lambda, S))^2
&<& (|S|-1) \frac{w}{5k} \ln\frac{k}{\delta}
\end{eqnarray}
\end{definition}

To get some intuition behind the above concept, note that $\mu(\lambda, S)$ is the average number of collisions between $\lambda$ and other masking strings in $S$. Equation~\ref{eqn:proof1} requires this average to be close to $\frac{w}{4k}$. Equation~\ref{eqn:proof2} requires the maximum number of collision to be close to this average.  Equation~\ref{eqn:proof3} bounds the ``variance'' of the number of collisions between $\lambda$ and other masking strings in $S$. The values on the right-hand side of the three equations are carefully chosen
such \mbox{}
that i) the random construction returns a promising set with good probability, and ii) a promising set must be an LCS.

%On the other hand, there is no easy way to check whether a set $S$ is a low collision set in polynomial time, since there are exponential number of possibilities for multi-set $T$ and computing the probability is difficulty.

\subsection{Probability of Being a Promising Set}
\label{sec:claim1}

The following proves that the probability of the random construction in Section~\ref{sec:construct} being a promising set.

\begin{theorem}
\label{the:promisingexist}
Consider any $\delta$ where $0<\delta\le 0.02$, any $k$ where\footnote{The theorem requires $k\ge 6\ln\frac{2k}{\delta^2}$. Recall from Section~\ref{sec:overview} that we assume $k =  \omega(\ln N)$. One can easily verify that as long as $\delta$ is not too small (e.g., as long as $\delta > \frac{1}{N^{5}}$), $k$ will be larger than $6\ln\frac{2k}{\delta^2}$ asymptotically.} $k\ge 6\ln\frac{2k}{\delta^2}$, and $w = 20(\ln\frac{k}{\delta})(\ln\frac{2k}{\delta^2})$. With probability at least $0.95$, where the probability is taken over the random choices used in the construction, the multi-set $S$ constructed in Section~\ref{sec:construct} is a $(k, w, \delta)$ \linebreak promising set of size $\frac{2k}{\delta}$.
\end{theorem}
\begin{proof}
Let $S = \{s_1, s_2, \ldots, s_\frac{2k}{\delta}\}$ be the multi-set constructed in Section~\ref{sec:construct}. With slight abuse of notation, for any $i$, we define
$\mu(s_i, S) = (\sum_{j, j\ne i} s_i\cdot s_j) / (|S|-1)$. We will later prove that, with probability at least $0.95$, the following holds for all $i$:
\begin{eqnarray}
\label{eqn:proof4}
|\mu(s_i, S) - \frac{w}{4k}| &<& \frac{0.04w}{4k} \\
%\label{eqn:proof5}
%\max_{j, j\ne i} |s_i\cdot s_j - \mu(s_i, S)| &<& 4\ln\frac{k}{\delta} \\
\label{eqn:substitute}
\max_{j, j\ne i} |s_i \cdot s_j - \frac{w}{4k}| &<& 3.96\ln\frac{k}{\delta}\\
%\label{eqn:proof6}
%\sum_{j, j\ne i} (s_i\cdot s_j - \mu(s_i, S))^2
%&<& (|S|-1) \frac{w}{5k} \ln\frac{k}{\delta} \\
\label{eqn:substitute2}
\sum_{j, j\ne i}(s_i\cdot s_j-\frac{w}{4k})^2
&<& (|S|-1) \frac{w}{5k} \ln\frac{k}{\delta}
\end{eqnarray}
Note that Equation~\ref{eqn:substitute} implies $S$ being a set: If there existed $i$ and $j$ such that $i\ne j$ and $s_i = s_j$, then we would have $\max_{j, j\ne i} |s_i\cdot s_j - \frac{w}{4k}| = w - \frac{w}{4k} > 3.96\ln\frac{k}{\delta}$, violating Equation~\ref{eqn:substitute}.
Now given that $S$ is a set, Equation~\ref{eqn:proof4} becomes equivalent to Equation~\ref{eqn:proof1}.
Combining Equation~\ref{eqn:proof4} and Equation~\ref{eqn:substitute} will lead to Equation~\ref{eqn:proof2}, since\linebreak
$\max_{s\in S \setminus \{\lambda\}} |\lambda\cdot s - \mu(\lambda, S)| =
\max_{j, j\ne i} |s_i\cdot s_j - \mu(s_i, S)|$ \linebreak
$\le |\mu(s_i, S) - \frac{w}{4k}| + \max_{j, j\ne i} |s_i \cdot s_j - \frac{w}{4k}| \le \frac{0.04w}{4k} + 3.96\ln\frac{k}{\delta}$ \linebreak
$\le 4\ln\frac{k}{\delta}$.
Finally, note that $\mu(s_i, S)$ is the average across all $s_i\cdot s_j$ for $j \ne i$. Hence it is easy to verify that for any real value $a$, we have $\sum_{j, j\ne i}(s_i\cdot s_j -\mu(s_i, S))^2 \le \sum_{j, j\ne i}(s_i\cdot s_j-a)^2$.
Take $a = \frac{w}{4k}$, and we can immediately see that Equation~\ref{eqn:substitute2} implies Equation~\ref{eqn:proof3}. This will complete our proof of $S$ being a promising set.

We will next show that Equations~\ref{eqn:proof4}, \ref{eqn:substitute}, and \ref{eqn:substitute2} hold with probabilities of at least $0.99$, $0.98$, and $0.98$, respectively. A trivial union bound then shows that with probability at least $0.95$, they all hold.

First for Equation~\ref{eqn:proof4},
consider any fixed $i$ and fixed $s_i$, and view the remaining masking strings in $S$ as random variables (as a function of the random choices in the construction). The quantity $\sum_{j, j\ne i} s_i\cdot s_j$ follows a binomial distribution with parameters $(|S|-1)w$ and $\frac{1}{4k}$. By the Chernoff bound, we have
$\Pr[|\mu(s_i, S) - \frac{w}{4k}| \ge \frac{0.04w}{4k}] =
\Pr[|\sum_{j, j\ne i} s_i\cdot s_j - (|S|-1)\frac{w}{4k}| \ge (|S|-1)\frac{0.04w}{4k}] \le 2 exp(-\frac{1}{3} \cdot (0.04)^2 \cdot (|S|-1)\frac{w}{4k}) = 2 exp(- \frac{8}{3000}(\frac{2}{\delta}-\frac{1}{k}) (\ln\frac{2k}{\delta^2}) (\ln\frac{k}{\delta}))
< 2exp(-2.7\ln\frac{k}{\delta}) = 2\cdot (\frac{\delta}{k})^{2.7}$.
By a union bound across all $\frac{2k}{\delta}$ possible $i$'s, we know that with probability at least $1- 4\cdot (\frac{\delta}{k})^{1.7} > 0.99$, Equation~\ref{eqn:proof4} holds.
%In the next, we will show that conditioned upon Equation~\ref{eqn:proof4} holds, Equation~\ref{eqn:proof5} holds with probability $0.98$ and Equation~\ref{eqn:proof6} holds with probability $0.98$. A trivial union bound will then complete the proof.

Next for Equation~\ref{eqn:substitute}, consider any fixed $i$ and fixed $s_i$, and view $s_j$ as a random variable.
The quantity $s_i \cdot s_j$ follows a binomial distribution with parameters of $w$ and $\frac{1}{4k}$, and a mean of $\frac{w}{4k}$. Also note that since $k\ge 6\ln\frac{2k}{\delta^2}$, we have $3.96\ln\frac{k}{\delta} \ge 4.752\times \frac{w}{4k}$. By the Chernoff bound, we have:
%$\Pr\Big[|s_i \cdot s_j - \frac{w}{4k}| \ge 3.96\ln\frac{k}{\delta}\Big]
%< \bigg(\frac{e}{1+\frac{3.96\ln\frac{k}{\delta}}{\frac{w}{4k}}}\bigg)^
%{\big(1+\frac{3.96\ln\frac{k}{\delta}}{\frac{w}{4k}}\big) \cdot \frac{w}{4k}}
%< \Big(\frac{e}{1+4.752}\Big)^ {3.96\ln\frac{k}{\delta}} < \Big(\frac{\delta}{k}\Big)^{2.96}$
\begin{eqnarray}
\nonumber
\Pr\Big[|s_i \cdot s_j - \frac{w}{4k}| \ge 3.96\ln\frac{k}{\delta}\Big]
&<& \bigg(\frac{e}{1+\frac{3.96\ln\frac{k}{\delta}}{\frac{w}{4k}}}\bigg)^
{\big(1+\frac{3.96\ln\frac{k}{\delta}}{\frac{w}{4k}}\big) \cdot \frac{w}{4k}} \\
\label{eqn:proof7}
&\hspace*{-2cm}<& \hspace*{-1cm}\Big(\frac{e}{1+4.752}\Big)^ {3.96\ln\frac{k}{\delta}} < \Big(\frac{\delta}{k}\Big)^{2.96}
\end{eqnarray}
There are total $|S|(|S|-1) < (\frac{2k}{\delta})^2$ possible combinations of $i$ and $j$. Take a union bound across all of these, we know that with probability at least $1- 4(\frac{\delta}{k})^{0.96} > 0.98$, Equation~\ref{eqn:substitute} holds for all $i$.

Finally for Equation~\ref{eqn:substitute2}, consider any fixed $i$ and fixed $s_i$, and view the remaining masking strings in $S$ as random variables. Under the given $i$ and $s_i$, define the random variable $X_j=
\frac{(s_i\cdot s_j-\frac{w}{4k})^2}{16\ln^2 \frac{k}{\delta}}$ for $j\ne i$. The quantity $s_i\cdot s_j$ is a binomial random variable with parameters $w$ and $\frac{1}{4k}$. Hence we have
$E[X_j] = \frac{E[(s_i\cdot s_j-\frac{w}{4k})^2]}{16\ln^2 \frac{k}{\delta}}$
$ = \frac{E[(s_i\cdot s_j-E[s_i\cdot s_j])^2]}
{16\ln^2 \frac{k}{\delta}} = \frac{\mbox{Var}[s_i\cdot s_j]}
{16\ln^2 \frac{k}{\delta}} = \frac{w\cdot \frac{1}{4k} \cdot (1-\frac{1}{4k})}{16\ln^2 \frac{k}{\delta}}
< \frac{w}{4k} \cdot \frac{1}{16\ln^2 \frac{k}{\delta}} <
\frac{w}{10k}(\ln\frac{k}{\delta}) \cdot \frac{1}{16\ln^2 \frac{k}{\delta}}$, and also
$E[\sum_{j,j\ne i}X_j] = (|S|-1)E[X_1] <\frac{w}{10k}(\ln\frac{k}{\delta}) \cdot \frac{|S|-1}{16\ln^2 \frac{k}{\delta}}$.
For the given $i$ and $s_i$, by Equation~\ref{eqn:proof7} and a union bound across all $j$, we know that with probability at least $1- 2(\frac{\delta}{k})^{1.96}$, $|s_i\cdot s_j -\frac{w}{4k}|< 3.96\ln\frac{k}{\delta}$ and hence $X_j < 1$. Conditioned upon such an event, we invoke the Chernoff bound and get \linebreak $\Pr\Big[\sum_{j, j\ne i}(s_i\cdot s_j-\frac{w}{4k})^2 \ge (|S|-1) \frac{w}{5k} \ln\frac{k}{\delta}\Big]
= \Pr\Big[\sum_{j, j\ne i} X_j \ge 2\cdot
\frac{w}{10k}(\ln\frac{k}{\delta}) \cdot \frac{|S|-1}{16\ln^2 \frac{k}{\delta}}\Big]
\le exp\Big(-\frac{1}{3} \frac{w}{10k}(\ln\frac{k}{\delta})
\cdot \frac{|S|-1}{16\ln^2 \frac{k}{\delta}}\Big) = exp\Big(-\frac{1}{3}
\frac{20(\ln\frac{k}{\delta})(\ln\frac{2k}{\delta^2})}{10k}
(\ln\frac{k}{\delta}) \cdot \frac{\frac{2k}{\delta}-1}{16\ln^2 \frac{k}{\delta}}\Big)
< exp\Big(-\frac{1}{24}(\frac{2}{\delta} - \frac{1}{k})\ln\frac{k}{\delta}\Big)
< \Big(\frac{\delta}{k}\Big)^{4.1}$.
%\begin{eqnarray*}
%&& \Pr\Big[\sum_{j, j\ne i}(s_i\cdot s_j-\frac{w}{4k})^2
%\ge (|S|-1) \frac{w}{5k} \ln\frac{k}{\delta}\Big] \\
%&=& \Pr\Big[\sum_{j, j\ne i} X_j \ge 2\cdot
%\frac{w}{10k}(\ln\frac{k}{\delta}) \cdot \frac{|S|-1}{16\ln^2 \frac{k}{\delta}}\Big]
%\\
%&\le& exp\Big(-\frac{1}{3} \frac{w}{10k}(\ln\frac{k}{\delta})
%\cdot \frac{|S|-1}{16\ln^2 \frac{k}{\delta}}\Big) \\
%&=& exp\Big(-\frac{1}{3}
%\frac{20(\ln\frac{k}{\delta})(\ln\frac{2k}{\delta^2})}{10k}
%(\ln\frac{k}{\delta}) \cdot \frac{\frac{2k}{\delta}-1}{16\ln^2 \frac{k}{\delta}}\Big) \\
%&<& exp\Big(-\frac{1}{24}(\frac{2}{\delta} - \frac{1}{k})\ln\frac{k}{\delta}\Big)
%< \Big(\frac{\delta}{k}\Big)^{4.1}
%\end{eqnarray*}

Hence we know that for the given $i$, with probability at least $1-2(\frac{\delta}{k})^{1.96} - (\frac{\delta}{k})^{4.1}$, Equation~\ref{eqn:substitute2} hold.
Finally, take a union bound across all possible $i$'s, we know that with probability at least $1-4(\frac{\delta}{k})^{0.96} - 2(\frac{\delta}{k})^{3.1} > 0.98$, Equation~\ref{eqn:substitute2} holds for all $i$.
\end{proof}

\subsection{A Promising Set Must Be an LCS}
\label{sec:claim2}
\mbox{}\vspace*{-4mm}
%This section proves that a promising set must be an LCS:

\begin{theorem}
\label{the:promising2low}
For all $0<\delta\le 0.02$, all $k\ge 1$, and
$w = 20(\ln\frac{k}{\delta})(\ln\frac{2k}{\delta^2})$, a $(k, w, \delta)$  promising set $S$ of size $\frac{2k}{\delta}$ must be a $(k, w, \delta)$ LCS.
\end{theorem}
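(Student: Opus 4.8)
The plan is to verify Definition~\ref{def:lcs} directly. I would fix an arbitrary pair $(i,m)$ with $1\le i\le m\le k$, draw $t_1,\dots,t_m$ uniformly at random from $S$ with replacement, and show that the two compatibility conditions fail with total probability at most $\delta$. Everything reduces to one concentration estimate, which I would isolate as a lemma: for any fixed $\lambda\in S$ and any $\ell\le k$, if $Y_1,\dots,Y_\ell$ are i.i.d.\ with the common law of $\lambda\cdot s$ for $s$ drawn uniformly from $S\setminus\{\lambda\}$, then $\Pr\big[\sum_{j=1}^{\ell}Y_j>\tfrac w2\big]\le(2k/\delta^2)^{-1.1}$. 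To prove the lemma I would invoke Bernstein's inequality. Equation~\ref{eqn:proof1} gives $E[Y_j]=\mu(\lambda,S)<1.04\cdot\tfrac{w}{4k}$, hence $E[\sum_j Y_j]\le m\mu<0.26w$ and the deviation we must rule out is at least $0.24w$; Equation~\ref{eqn:proof2} gives the almost-sure bound $|Y_j-E[Y_j]|<4\ln\tfrac k\delta$; and Equation~\ref{eqn:proof3} gives $\mathrm{Var}(Y_j)<\tfrac{w}{5k}\ln\tfrac k\delta$, so $\sum_j\mathrm{Var}(Y_j)<\tfrac{w}{5}\ln\tfrac k\delta$. Feeding these into Bernstein and substituting $w=20(\ln\tfrac k\delta)(\ln\tfrac{2k}{\delta^2})$ makes the exponent collapse to about $1.1\ln\tfrac{2k}{\delta^2}$, giving the lemma.

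Given the lemma, Condition~1 of Definition~\ref{def:lcs} follows by a union bound. For a fixed $\lambda\in S$, conditioning on $\lambda\notin T$ makes $t_1,\dots,t_m$ i.i.d.\ uniform on $S\setminus\{\lambda\}$, so $\Pr[\lambda\notin T\text{ and }\sum_j\lambda\cdot t_j>\tfrac w2]\le\Pr[\sum_j\lambda\cdot t_j>\tfrac w2\mid\lambda\notin T]$ is bounded by the lemma with $\ell=m$. Summing over the $|S|=\tfrac{2k}\delta$ choices of $\lambda$ and using $k\ge1$, $\delta\le0.02$, the probability that $T$ is incompatible with some $\lambda\in S\setminus T$ is below $\tfrac{2k}\delta\cdot(2k/\delta^2)^{-1.1}<0.4\,\delta$.

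For Condition~2 I would condition on the value $t_i=\lambda$; then the $t_j$ with $j\ne i$ are i.i.d.\ uniform on $S$ and independent of $t_i$. With probability at most $(m-1)/|S|\le(k-1)\delta/(2k)<\delta/2$ some $t_j$ ($j\ne i$) equals $\lambda$; off that event the $t_j$ ($j\ne i$) are i.i.d.\ uniform on $S\setminus\{\lambda\}$, so the lemma with $\ell=m-1$ bounds $\Pr[\sum_{j\ne i}\lambda\cdot t_j>\tfrac w2]$. Averaging over $\lambda$ yields $\Pr[T_{\bar i}\text{ incompatible with }t_i]<\delta/2+(2k/\delta^2)^{-1.1}<0.52\,\delta$. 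Adding the two failure probabilities gives strictly less than $\delta$, which is exactly what Definition~\ref{def:lcs} demands; the case $m=1$ is trivial, since then $T_{\bar i}=\emptyset$ and Condition~1 is just Equation~\ref{eqn:proof2} with ample slack.

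I expect the concentration lemma to be the crux. Plain Hoeffding is too weak, because the per-term range $4\ln\tfrac k\delta$ yields a tail that degrades like $1/k$ once summed over $m\le k$ terms, whereas the true fluctuation is governed by the $O(w/k)$ per-term variance of Equation~\ref{eqn:proof3}; making the $m$-dependence cancel forces the variance-sensitive Bernstein form. After that, the numerical constants --- the $0.04$ slack in Equation~\ref{eqn:proof1}, the factor $20$ inside $w$, and the hypothesis $\delta\le0.02$ --- have to be tracked carefully so that the Bernstein exponent exceeds $\ln\tfrac{2k}{\delta^2}$ by a margin large enough to absorb both the union bound over all of $S$ and the extra $\delta/2$ loss incurred in Condition~2.
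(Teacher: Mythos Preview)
Your proposal is correct and follows essentially the same route as the paper's own proof: both isolate the same Bernstein-based tail bound (your lemma is the paper's Equation~\ref{eqn:proof8}), feed Equations~\ref{eqn:proof1}--\ref{eqn:proof3} into it in exactly the way you describe, then union-bound over $|S|$ for Condition~1 and absorb a $\delta/2$ collision term for Condition~2. The only cosmetic difference is that the paper phrases the Condition~1 union bound as a row/column counting argument on an $|S|\times|S|^m$ matrix rather than as explicit conditioning on $\lambda\notin T$, but the two are equivalent and yield the same $\approx 0.44\delta$ and $\approx 0.5\delta$ contributions you obtain.
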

\begin{proof}
Imagine that we choose $m\le k$ elements, $t_1$ through $t_m$, from $S$ uniformly randomly with replacement, and define the multi-set $T = \{t_1, t_2, \ldots, t_m\}$.
Remember that here $S$ is already fixed --- only the $t_i$'s and $T$ are random variables.

We will first prove that $S$ satisfies the first requirement of LCS. Specifically, we will show that with probability at least $1-0.44\delta$, $\sum_{i=1}^m \lambda\cdot t_i \le \frac{w}{2}$ for all $\lambda\in S\setminus T$. We consider a binary matrix where its $|S|^k$ columns correspond to all the possible $T$'s, and its $|S|$ rows correspond to all the possible $\lambda\in S$. A matrix entry corresponding to given $T$ and $\lambda$ is {\em bad} iff $\sum_{i=1}^m \lambda\cdot t_i > \frac{w}{2}$ and $\lambda\in S\setminus T$. To prove the earlier claim, it suffices to show that at least $|S|^k\times (1-0.44\delta)$ columns contain no bad entries. Directly proving this will be tricky, so we instead prove that for each row, at most $\frac{0.44\delta}{|S|}$ fraction of the entries are bad. This will then imply that the total number of bad entries in the matrix is at most $|S|^k \times |S| \times \frac{0.44\delta}{|S|} = |S|^k \times 0.44\delta$, and hence there can be at most $|S|^k \times 0.44\delta$ columns containing bad entries.

Now to prove that each row has at most $\frac{0.44\delta}{|S|}$ fraction of its entries being bad, it suffices to prove that for any given $\lambda$, when we choose $t_1$ through $t_m$ from $S\setminus \{\lambda\}$ uniformly randomly with replacement, we will have:
\begin{eqnarray}
\label{eqn:proof8}
\Pr\Big[\sum_{i=1}^m \lambda\cdot t_i \ge \frac{w}{2}\Big] &\le& \frac{0.44\delta}{|S|}
\end{eqnarray}
To prove Equation~\ref{eqn:proof8}, define $Y_i= \lambda\cdot t_i-\mu(\lambda, S)$ for $1\le i \le m$, and we directly have $\mbox{E}[Y_i]=0$. (Note that $\lambda\cdot t_i$ does {\em not} follow a binomial distribution.)
Since $S$ is a promising set, by Equation~\ref{eqn:proof1} we have
$\sum_{i=1}^{m}\lambda\cdot t_i = \sum_{i=1}^{m}(Y_i+\mu(\lambda, S)) = m\cdot \mu(\lambda, S) + \sum_{i=1}^{m} Y_i
< \frac{1.04w}{4} + \sum_{i=1}^{m} Y_i$.
Next, for all $i$, Equation~\ref{eqn:proof2} and \ref{eqn:proof3} tell us that
$|Y_i| < 4\ln\frac{k}{\delta}$ and $\mbox{E}[Y_i^2]< \frac{w}{5k} \ln\frac{k}{\delta}$.
By Bernstein's inequality~\cite{uspensky37}, we have
$\Pr\Big[\sum_{i=1}^m Y_i>\frac{0.96w}{4}\Big]
\le \exp\bigg(-\frac{\frac{(0.96w)^2}{16}}{2m\cdot \frac{w}{5k} \ln\frac{k}{\delta}+ \frac{2}{3} \cdot 4\ln\frac{k}{\delta} \cdot \frac{0.96w}{4}}\bigg)
\le exp\Big(-\frac{\frac{0.9216}{16}w}{1.04\ln\frac{k}{\delta}}\Big) = exp\Big(-\frac{18.432}{16.64} \ln\frac{2k}{\delta^2}\Big) < \Big(\frac{\delta^2}{2k}\Big)^{1.107} = $\linebreak $\Big(\frac{\delta^2}{2k}\Big)^{0.107}\cdot \frac{\delta}{|S|} < \frac{0.44\delta}{|S|}$.
%\begin{eqnarray*}
%\Pr\Big[\sum_{i=1}^m Y_i>\frac{0.96w}{4}\Big]
%\hspace*{-3mm}&\le&\hspace*{-3mm} \exp\bigg(-\frac{\frac{(0.96w)^2}{16}}{2m\cdot \frac{w}{5k} \ln\frac{k}{\delta}+ \frac{2}{3} \cdot 4\ln\frac{k}{\delta} \cdot \frac{0.96w}{4}}\bigg)\\
%&\le& exp\Big(-\frac{\frac{0.9216}{16}w}{1.04\ln\frac{k}{\delta}}\Big) = exp\Big(-\frac{18.432}{16.64} \ln\frac{2k}{\delta^2}\Big) \\
%&<& \Big(\frac{\delta^2}{2k}\Big)^{1.107} = \Big(\frac{\delta^2}{2k}\Big)^{0.107}\cdot \frac{\delta}{|S|} < \frac{0.44\delta}{|S|}
%\end{eqnarray*}
In turn, for any given $\lambda\in S$, Equation~\ref{eqn:proof8} follows since
$\Pr[\sum_{i=1}^m \lambda\cdot t_i \ge \frac{w}{2}]
\le \Pr[\sum_{i=1}^m Y_i>\frac{0.96w}{4}]
\le \frac{0.44\delta}{|S|}$.
%This completes our proof showing that with probability at least $1-0.44\delta$, $\sum_{t\in T} \lambda\cdot t \le \frac{w}{2}$ for all $\lambda\in S\setminus T$.

We next prove that $S$ satisfies the second requirement for an LCS. Specifically, we will show that for any fixed $i$ where $1\le i\le m$, with probability at least $1-0.505\delta$, the multi-set $\{t_1, \ldots, t_{i-1}, t_{i+1}, \ldots, t_m\}$ is compatible with $t_i$. We obviously only need to prove this for $m\ge 2$. Since all the $t_i$'s are symmetric, without loss of generality, assume $i = m$. Define $T_{\overline{m}} = \{t_1, t_2, \ldots, t_{m-1}\}$. We claim that with probability at least $1-0.5\delta$, $t_m\notin T_{\overline{m}}$. To see why, note that $t_1$ through $t_{m-1}$ corresponds to at most $m-1$ distinct elements form $S$, and hence
$\Pr[t_m\in T_{\overline{m}}] \le \frac{k-1}{|S|} < 0.5\delta$.
Now conditioned upon $t_m\notin T_{\overline{m}}$, each $t_j$ for $1\le j\le m-1$ is a uniformly random string in $S\setminus \{t_m\}$. One can now apply a similar proof as for Equation~\ref{eqn:proof8} (after replacing $m$ with $m-1$), and show that
$\Pr[\sum_{j=1}^{m-1} (t_m \cdot t_j) \ge \frac{w}{2}]
\le \frac{0.44\delta}{|S|} \le 0.0044\delta$. Hence we know that with probability at least $(1-0.5\delta)\cdot (1-0.0044\delta) > 1-0.5044\delta$, the multi-set $T_{\overline{m}}$ is compatible with $t_m$.

Finally, a trivial union bound across the two requirements shows that $S$ is an LCS.
\end{proof}

}

\section{Numerical Examples}
\label{sec:exp}

To supplement the formal guarantees of BMC, this section presents some basic numerical examples.
%Note that since this paper focuses on the theoretical results, it is not our goal to provide an extensive experimental evaluation.
%
%: This paper focuses on the {\em theoretical underpinning}, with Theorem 1, 2, 4, and 5 being our main technical results. An extensive experimental evaluation, which may further involve potential modifications to the physical layer, is beyond the scope of this paper. Instead, our work aims to spur further systems research along this line.
%
In the context of our example scenario (Figure~\ref{fig:app}), we consider a receiver with $k=100$ neighboring senders. Each sender has a data item (including CRC) of $d$ bytes, to be sent to the receiver. We will consider $d = 25$ to $100$. We will use RS symbol size of $1$ byte, and hence $w = 50$ to $200$. BMC requires an LCS $S$. Our experiments will directly use the multi-set constructed in Section~\ref{sec:construct} as $S$, with $|S|= 2\times 10^6$.

\subsection{Overhead of BMC Encoding/Decoding}

\Paragraph{Space overhead.}
Recall that each masking string takes $w \log_2(4k)$ bits to store. Storing $S$ thus takes no more than $500$MB under our previous parameters.
%
%If one is willing to assume that pseudo-random number generators (PRNG) can generate random bits from a given (random) seed, then the space overhead can be further reduced, substantially. (Rigorously speaking, no PRNG can generate (truly) random bits. But in practice, PRNG is used in most implementations of randomized algorithms.) Recall that we construct an LCS by generating random masking strings in a certain way as in Algorithm 3. With the assumption on PRNG, we can now simply generate each masking string from a different (random) seed. These seeds can be consecutive: For example if $|S|= 2\times 10^6$, then the seeds can simply be $s+1$ through $s+(2\times 10^6)$ for some value $s$. Doing so effectively reduces the space overhead of storing $S$ to $O(1)$.\footnote{Note that we will still need to store the value of $s$: Even though the probability of any given $s$ translating into an LCS (via Algorithm 3) is close to $1$, there can still be rare cases where certain $s$ values do not translate to LCS. Allowing a choice on $s$ thus helps to avoid those (rare) bad cases.}
%
On a sender, it is possible to further reduce such overhead.
Recall that a sender only needs to pick a random masking string from $S$. In practice, the sender may just pick a random masking string beforehand, and store that masking string (incurring only about $60$ to $250$ bytes).
Such asymmetric overhead is a salient feature of BMC: For example, the senders may be resource-constrained sensors, while the receivers may be more powerful.

\Paragraph{Computation overhead.}
For BMC encoding, Algorithm~\ref{alg:masking} and \ref{alg:data} show that the computational overhead mostly comes from RS encoding. Since the overhead of RS code is well understood~\cite{lin04ecc,taipale94}, we do not provide separate results here due to space constraints.

BMC decoding has two phases, for decoding masking strings and data items, respectively. Decoding masking strings entails an exhaustive enumeration of all masking strings in $S$. We have implemented the masking string decoding algorithm as a single-threaded Java program. We observe that under our previous parameters, on average it takes about $1.06$ms to $4.72$ms to decode one masking string,
%(or $106$ms to $472$ms to decode the $100$ masking strings from all the senders),
on a 3.4GHz desktop PC. Following the discussion in Section~\ref{sec:practical}, in practice, masking strings will only need to be re-sent and re-decoded when the network topology changes. Hence
such decoding cost can be easily amortized across many (e.g., $100$) data items. Also note that our decoding algorithm can be easily made parallel, and thus will likely enjoy a linear speedup when running over multiple cores.

The computational overhead in the second phase of BMC decoding is dominated by RS decoding. Again, we do not provide separate evaluation here since such overhead is well understood~\cite{lin04ecc,taipale94}.

\subsection{BMC vs. Baselines}

We consider a scenario with $t = 100,000$ bytes of airtime available for the $k$ senders to send their respective data items to the receiver. Here, one {\em byte of airtime} is the transmission airtime of exactly one byte. The byte airtime needed by BMC is $9kd$, which ranges from $22,500$ to $90,000$ under our parameters, and is always below $t$. We will use {\em failure rate} as the measure of goodness, defined as the fraction of data items that are not successfully delivered by the $t$ deadline.

%\Paragraph{Measures of goodness.}
%We will use two measures of goodness. The first is {\em failure rate}, defined as the fraction of data items that are not successfully delivered by the $t$ deadline. The second is the {\em number of bytes sent} by each sender, including bytes that collide with other senders. For BMC, ``blank'' bytes do not count toward this metric, since a sender does not emit radio signals for ``blank'' bytes.

\Paragraph{Four schemes to compare.}
We consider two baselines. The first baseline {\tt RandAccess1} divides the available total time into $l = \frac{t}{d}$ {\em intervals}. In each interval, independently with probability $\frac{1}{k}$, a sender sends its data item. One can easily verify that the probability of $\frac{1}{k}$ maximizes the utilization of the channel. A data item is considered delivered {\em successfully} if there exists at least one interval during which the corresponding sender is the sole sender. The second baseline {\tt RandAccess2} is the same, except that each sender chooses exactly $\frac{l}{k}$ distinct intervals out of the $l$ intervals, in a uniformly random fashion.

We also consider two versions of BMC. The first version {\tt BMC1} simulates the behavior of Algorithm 1 and 2.
We assume that CRC has no false negatives, and hence do not simulate it explicitly. We do not simulate RS code encoding/decoding either --- instead, since we use a coding rate of $\frac{1}{2}$ in the RS code, our simulation will assume that RS decoding succeeds iff the number of erased RS symbols is at most $\frac{w}{2}$.
Note that the byte airtime needed by BMC is $9kd$, and hence {\tt BMC1} may not fully utilize the available time $t$. In {\tt BMC2}, each sender will repeat its actions in {\tt BMC1}, for $\lfloor\frac{t}{9kd}\rfloor$ times.

Because the failure rate under {\tt BMC2} can be rather small, it may take excessive simulation time to observe any failure. Hence the results under {\tt BMC2}  are directly {\em computed} from those under {\tt BMC1}, rather than from simulation. For example, if the failure rate under {\tt BMC1} is $0.1$ and if $\lfloor\frac{t}{9kd}\rfloor = 3$, we will plot $0.001$ as the failure rate under {\tt BMC2}. The failure rates for all other schemes are directly obtained from simulation.
When the failure rate is small, we increase the number of trials to observe a sufficient number of failures.

%The number of bytes sent by each sender in {\tt RandAccess2}, {\tt BMC1}, and {\tt BMC2} can all be exactly computed: For {\tt RandAccess2}, it is $\frac{l}{k}\times d = \frac{t}{k}$. For {\tt BMC1}, it is $\frac{w}{8} + w = \frac{9d}{4}$. Finally, for {\tt BMC2}, it is $\frac{9d}{4}\times \lfloor\frac{t}{9kd}\rfloor$. Hence for the number of bytes sent in these scheme, we will direct use these formulae. Finally, the number of bytes sent in {\tt RandAccess1} is not fixed, and will be obtained from simulation.

%\begin{figure}
%  \centering
%  \includegraphics[scale=.24, angle=270]{}
%  \vspace{-0.6em}
%  \caption{Failure rate of different schemes.}
%  \vspace{-5mm}
%  \label{fig:resultdelivery}
%\end{figure}

\begin{figure}
	\centering
	\includegraphics[width=\columnwidth]{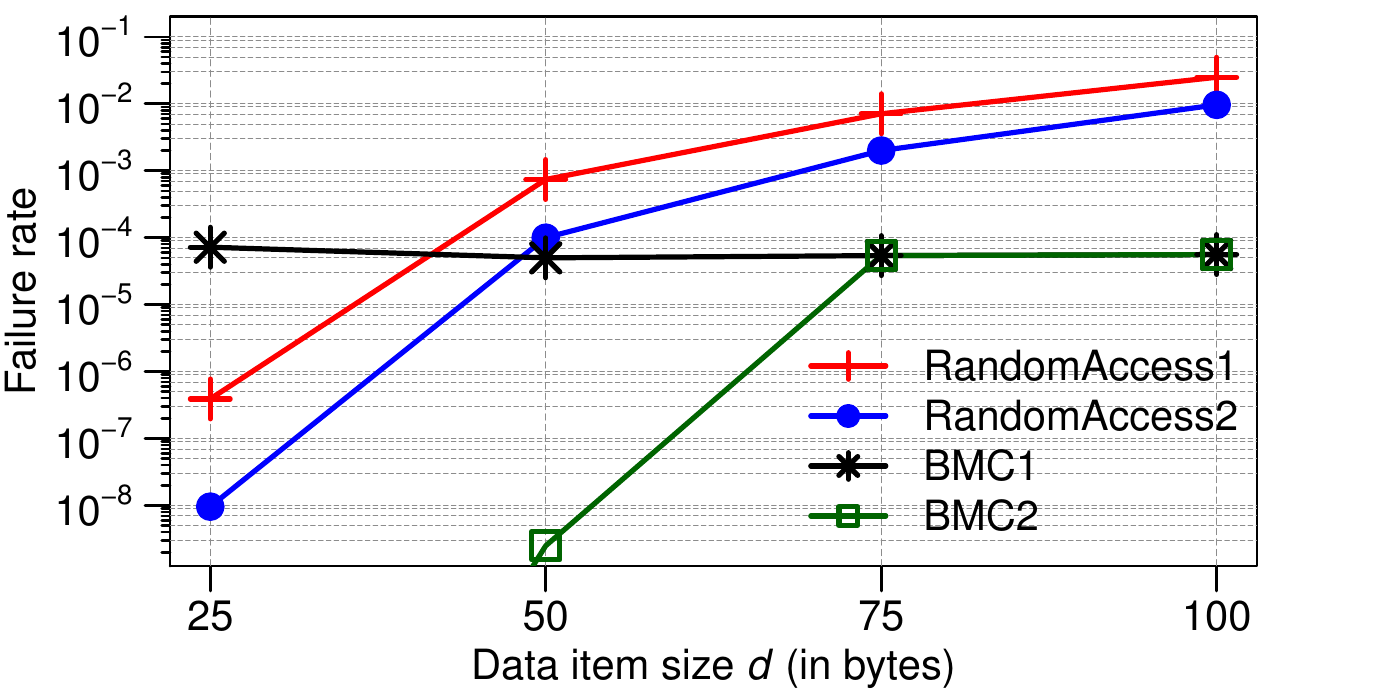}
	\vspace{-5mm}
	\caption{Failure rate of different schemes.}
	\vspace{-7mm}
	\label{fig:resultdelivery}
\vspace*{2mm}
\end{figure}

%\begin{figure}
%  \centering
%  \includegraphics[scale=.24, angle=270]{}
%  \vspace{-0.6em}
%  \caption{The number of bytes sent by each sender in different schemes over increasing data item size.}
%  \vspace{-0.3em}
%  \label{fig:resultenergy}
%\end{figure}

\Paragraph{Comparison.}
Figure~\ref{fig:resultdelivery} compares the failure rates of the four schemes. {\tt BMC2} consistently achieves a failure rate about 2 orders of magnitude smaller than the two baselines. {\tt BMC1} does not always do so because it does not actually use up the available airtime: Under $d = 25$, {\tt BMC1} uses only $22$\% of the airtime available.
The failure rate of {\tt BMC1} is largely independent of the data item size. This is expected since its failure rate is largely determined by the size of $S$.
%For {\tt BMC2}, under $d = 25$ and $50$, the senders are able to invokes the pair Algorithm 1 and 2 for $5$ and $2$ times, and hence significantly brings down the failure rate.
For all other schemes, the failure rate increases with the data item size, since under the given time constraint and with larger data items, they have fewer opportunities to send.
%{\tt RandAccess2} consistently achieves a lower failure rate than {\tt RandAccess1}, because some failure cases in {\tt RandAccess1} are due to the sender ends up choosing less that $\frac{1}{k}$ fraction of all the intervals.

%Figure~\ref{fig:resultenergy} presents the number of bytes sent by each sender in the four schemes. One can see that in both {\tt BMC1} and {\tt BMC2}, a sender only needs to send at most about $\frac{1}{4}$ of the total number of bytes sent in {\tt RandAccess1} and {\tt RandAccess2}. If the energy consumption of sending a byte is fixed, then {\tt BMC1} and {\tt BMC2} will achieve about $3$x energy saving on communication for a sensor in our example application scenario. The fluctuation in the {\tt BMC2} is simply due to rounding of the $\lfloor\frac{t}{9kd}\rfloor$ in {\tt BMC2}.
%%Finally, the two curves for {\tt RandAccess1} and {\tt RandAccess2} are almost identical. This is not surprising since the expected number of bytes sent by each sender in these two schemes is the same.

\section{Conclusions}
\label{sec:conclusions}

Given the fundamental limit of $\mathbf{R} = O(\frac{1}{\ln N})$ for
scheduling in multi-sender multi-receiver wireless networks~\cite{ghaffari12}, our ultimate goal is to achieve $\mathbf{R} = \Theta(1)$ and also to avoid all the complexities in scheduling. As the theoretical underpinning for achieving our ultimate goal, this work proposes BMC and proves that BMC enables wireless networks to achieve $\mathbf{R} = \Theta(1)$. We hope that our theoretical results can attest the promise of this direction, and spur future systems research (especially on the physical layer) along this line.

\begin{acks}
We thank our anonymous shepherd and the anonymous IPSN reviewers for their helpful feedbacks, which significantly improved this paper. We thank Rui Zhang for helpful discussions, and Sidharth Jaggi for helpful comments regarding the sublinear-time group testing literature.
This work is partly supported by the research grant MOE2017-T2-2-031 from Singapore Ministry of Education Academic Research Fund Tier-2. Binbin Chen is supported by the National Research Foundation, Prime Minister's Office, Singapore, partly under the Energy Programme administrated by the Energy Market Authority (EP Award No. NRF2017EWT-EP003-047) and partly under the Campus for Research Excellence
and Technological Enterprise (CREATE) programme. Jonathan Scarlett is supported by an NUS Early Career Research Award.
\end{acks}

\bibliographystyle{ACM-Reference-Format}
\bibliography{refs}

\end{document}